\documentclass[a4paper,12pt]{article}
\usepackage{amssymb,amsmath,amsthm}
\usepackage{bm}%
\usepackage{ascmac}
\usepackage{mathrsfs}
\usepackage{stmaryrd}
\usepackage{arydshln}
\usepackage{comment}
\usepackage{braket}
\usepackage{xcolor}
	\colorlet{mycol}{black}
\usepackage{upgreek}
\usepackage{hyperref}
\hypersetup{
	colorlinks=true,
	citecolor=blue,
	linkcolor=blue,
	urlcolor=blue,
}

\def\nn{\notag}
\def\Z2{\mathbb{Z}_2^2}
\def\g{\mathfrak{g}}

\def\DP#1#2{\hat{#1}\cdot \hat{#2}}
\def\PH#1#2{(-1)^{\hat{#1}\cdot \hat{#2}}}
\def\osp{\mathfrak{osp}}

\def\pr#1{\overleftarrow{\mathbf{V}}^{\mathrm{pr}}_{#1}}
\def\FD#1#2{\mathcal{D}_{#1}^{[#2]}}

\numberwithin{equation}{section} 
\newtheorem{dfn}{Definition}
\newtheorem{thm}{Theorem}
\newtheorem{prop}{Proposition}
\newtheorem{lemma}{Lemma}

\theoremstyle{definition}

\begin{document}
	
\thispagestyle{empty}
\begin{center}
	
	{\Large\bfseries
		An integrable $\mathbb{Z}_2^2$-graded extension of Camassa-Holm equation and its bi-Hamiltonian structure
	}
	
	\vspace{1.5cm}
	
	{\large
		N. Aizawa,\textsuperscript{1} 
		\quad 
		Ichi Fujii,\textsuperscript{2} 
		\quad 
		Ren Ito\textsuperscript{3} and \quad
		L. Vasconcellos\textsuperscript{4}
	}
	
	\vspace{5mm}
	
	{\small
		\textsuperscript{1,2,3}Department of Physics, Graduate School of Science, \\
		Osaka Metropolitan University, Sugimoto Campus,
		Osaka 599-8585, Japan\\

		\vspace{3mm}
		
		\textsuperscript{4}Department of Mathematics, Universidade Federal de S\~{a}o Carlos (UFSCar),\\
		S\~{a}o Carlos - SP, Brazil\\
		
		\vspace{3mm}
		\textsuperscript{1}\texttt{aizawa@omu.ac.jp} \\
		\textsuperscript{2}\texttt{sq25482d@st.omu.ac.jp}\\
		\textsuperscript{3}\texttt{sd22709y@st.omu.ac.jp}\\
		\textsuperscript{4}\texttt{lucas.vasconcellos@estudante.ufscar.br}
	}
	
	\vspace{1.5cm}
\end{center}

\vfill
\begin{abstract}
	By constructing Lax operators in the loop algebra of the $\mathbb{Z}_2^2$-graded extension of the Lie superalgebra $\osp(1|2)$, we derive a $\mathbb{Z}_2^2$-graded extension of the Camassa-Holm equation. The resulting equation is an integrable nonlinear PDE for a system of four $\mathbb{Z}_2^2$-graded commutative functions, each associated with a distinct $\mathbb{Z}_2^2$-degree. We further show that the $\mathbb{Z}_2^2$-Camassa-Holm equation admits a bi-Hamiltonian structure. 
	As a consequence, it  possesses infinitely many conserved quantities, including one with non-trivial $\mathbb{Z}_2^2$-degree, which are mutually in involution with respect to the $\mathbb{Z}_2^2$-graded Poisson brackets.

\end{abstract}

\clearpage
\setcounter{page}{1}
\section{Introduction}

Classical integrable systems such as soliton equations are deeply connected to Lie algebras. 
Hierarchies of integrable equations can be constructed from data associated with a Lie algebra. 
By replacing the Lie algebra with a Lie superalgebra, one may generalize integrable systems to include both bosonic and fermionic degrees of freedom. 
It has been demonstrated that a further generalization is possible by replacing the usual $\mathbb Z_2$-grading with a $\Z2$-grading, leading to $\Z2$-graded Lie (super)algebras, where $\Z2:=\mathbb Z_2\times\mathbb Z_2$ \cite{BruSG,AIKTTslint,AiFujiIto,AFITTsuper,yuan2026super}. 
As a consequence of this extension, one obtains integrable systems involving parabosonic and parafermionic degrees of freedom. 
$\Z2$-graded extensions of sine-Gordon, Liouville, sinh-Gordon, KdV and modified KdV equations have been constructed and their properties have been analyzed in the literature. 
Furthermore, $\Z2$-graded generalizations of the Virasoro algebra associated with the $\Z2$-graded Liouville equation have been derived.  
These results demonstrate that $\Z2$-graded Lie (super)algebras offer an exciting new direction in the study of integrable systems. 

These developments naturally raise the question of whether other integrable systems can also be incorporated into the $\Z2$-graded framework. 
The purpose of the present work is to consider a $\Z2$-graded extension of the Camassa-Holm equation (CH equation for short) \cite{CHoriginal,FUCHSSTEINERFOKAS}, which is a prototypical example of an integrable nonlinear PDE.

The CH equation has been proposed as a model for shallow water waves, with its peaked soliton solutions, known as \textit{peakons}, describing the propagation of shallow water waves. 
It is one of the most extensively studied nonlinear PDEs and  exhibits several mutually related manifestations of integrability, including a Lax operator formulation, a bi-Hamiltonian structure \cite{CHoriginal,FUCHSSTEINERFOKAS}, B\"acklund transformation \cite{RasinSchiff}, an inverse scattering transform  \cite{CoGeIva}, a geometric interpretation as geodesic flow on the Bott-Virasoro group \cite{ChSchiff,MISIOLEK,Kouran,ConKole,Khesin,Kolev}, a conformal nature \cite{ivanov2005,ivanov2007conformal} and a description in terms of pseudo-spherical surfaces \cite{Reyes,HerReyes}. 

It has also been extensively generalized in various directions while preserving integrability. We mention here two such directions: super extensions \cite{ChSchiff,DevSch,POPOWICZ,LeneLech,ZhangZuo,GengXueWu,LiTan} and polynomial homogeneous generalizations, including the Novikov equation \cite{NovikovOrg}. 

In this paper, we apply the algebraic construction developed in \cite{AiFujiIto} (see also \cite{aratyn2004algebraic,Aratyn2001,Ferreira}) to derive a $\Z2$-graded extension of the CH equation. 
For the underlying algebraic structure, we employ the loop algebra of  the $\Z2$-graded extension of $\osp(1|2)$. 
We then focus on bi-Hamiltonian structure of the derived equation. 
A remarkable feature of the $\Z2$-graded CH equation is that it admits infinitely many conserved quantities of non-trivial $\Z2$-degree (see \S \ref{SEC:algebra} for the definition of the $\Z2$-degree). We also show that all the conserved quantities, of both trivial and non-trivial $\Z2$-degrees, are in involution with respect to the $\Z2$-graded Poisson bracket. These results demonstrate that the $\Z2$-grading gives rise to a rich integrable structure beyond that of the ordinary CH equation. 

This paper is organized as follows. 
In \S \ref{SEC:algebra}, we present the general definition of  $\Z2$-graded Lie superalgebras and introduce the $\Z2$-graded loop algebra of $\osp(1|2)$.  
In \S \ref{SEC:CHeq}, we use the loop algebra to construct integrable systems via the zero-curvature equation. This construction leads to a $\Z2$-graded extension of the CH equation, which we refer to as the $\Z2$-CH equation. 
Then, in \S \ref{SEC:SZCE}, we employ the stationary zero-curvature equation to derive a recurrence relation which is essential to establish the bi-Hamiltonian structure of the $\Z2$-CH equation. 
The bi-Hamiltonian structure is established in \S \ref{SEC:BHS}, where we introduce an infinite sequence of Hamiltonian functionals with both trivial and non-trivial $\Z2$-degrees and prove that they are in involution with respect to the $\Z2$-graded Poisson bracket. 
Finally, we summarize our results and comment on some open problems in \S \ref{SEC:CR}.

\section{Loop extension of $\Z2$-graded $\osp(1|2)$} \label{SEC:algebra}
\setcounter{equation}{0}

Let us recall the definition of the $\Z2$-graded Lie superalgebra \cite{rw1,rw2,scheu}. 
Let $\mathfrak{g}$ be a vector space and $\hat{a}\equiv [a_1a_2]$ an element of $\Z2$. 
Suppose that $\mathfrak{g}$ is a direct sum of graded components 
\begin{equation}
	\mathfrak{g} = \bigoplus_{\hat{a} \in \Z2}  \g_{\hat{a}}= \g_{[00]} \oplus \g_{[10]} \oplus \g_{[01]} \oplus \g_{[11]}.   
\end{equation}
If $\mathfrak{g}$ admits a bilinear operation (the graded Lie bracket), denoted by $ \llbracket \cdot, \cdot \rrbracket, $ and satisfying the identities 
\begin{align}
	& \llbracket A_{\hat{a}}, B_{\hat{b}}  \rrbracket \in \g_{\hat{a}+\hat{b}},
	\\
	& \llbracket A_{\hat{a}}, B_{\hat{b}} \rrbracket = -(-1)^{\hat{a}\cdot\hat{b}} \llbracket  B_{\hat{b}}, A_{\hat{a}} \rrbracket,
	\\
	& (-1)^{\hat{a}\cdot\hat{c}} \llbracket A_{\hat{a}}, \llbracket B_{\hat{b}}, C_{\hat{c}} \rrbracket \rrbracket + \PH{b}{a} \llbracket B_{\hat{b}}, \llbracket C_{\hat{c}}, A_{\hat{a}} \rrbracket \rrbracket + \PH{c}{b}\llbracket C_{\hat{c}}, \llbracket A_{\hat{a}}, B_{\hat{b}} \rrbracket \rrbracket  = 0, \label{AlgJacobi}
\end{align}
where $ A_{\hat{a}}, B_{\hat{a}}, C_{\hat{a}} $ are homogeneous elements of $\mathfrak{g}_{\hat{a}}$ and 
\begin{equation}
	\hat{a} + \hat{b} = [a_1+b_1 ,a_2+b_2] \in \Z2, \qquad 
	\hat{a} \cdot \hat{b} = a_1 b_1 + a_2 b_2 \in \mathbb{Z}_2,
\end{equation}
then $\mathfrak{g}$ is referred to as a $\Z2$-graded Lie superalgebra. 
We define the \textit{degree} of $ A_{\hat{a}} $ by the corresponding element of $\Z2$: $ \deg A_{\hat{a}} = \hat{a}. $

It is clear from the definition that the graded Lie brackets are realized by commutators and anticommutators as follows 
\begin{equation}
	\llbracket A_{\hat{a}}, B_{\hat{b}} \rrbracket
	= 
	\begin{cases}
		[A_{\hat{a}}, B_{\hat{b}}], & \DP{a}{b} = 0,
		\\[10pt]
		\{ A_{\hat{a}}, B_{\hat{b}} \}, & \DP{a}{b} = 1.
	\end{cases}
\end{equation}
If $\llbracket A, B \rrbracket = 0, $ we say that $A$ and $ B$ are $\Z2$-\textit{graded commutative}. 
It is also observed from the definition that $ \g$ has $\mathbb{Z}_2$-grading, too:
\begin{equation}
	\g = \g_{[0]} \oplus \g_{[1]}, \qquad \g_{[0]} := \g_{[00]} \oplus \g_{[11]}, \ 
	\g_{[1]} := \g_{[10]} \oplus \g_{[01]}.
\end{equation}

In the present work, we take the $\Z2$-graded extension of the Lie superalgebra $\osp(1|2)$ introduced in \cite{Aizawa_2020,AmaAi} and consider its loop extension. 
This infinite-dimensional $\Z2$-graded loop superalgebra is also denoted by $\g$; this should not cause any confusion. Each $\Z2$-graded subspace of $\g$ is also infinite-dimensional and is given by
\begin{equation}
	\g_{[00]} = \{ \ K^0_m, \ K^{\pm}_m \ \}, 
	\quad
	\g_{[10]} = \{  P^{\pm}_m \ \}, 
	\quad  
	\g_{[01]} = \{  Q^{\pm}_m \ \}, 
	\quad
	\g_{[11]} = \{ \ L^0_m, \ L^{\pm}_m \ \}
\end{equation}
where $m \in \mathbb{Z}.$ 
We now present their non-vanishing relations in terms of (anti)commutators. 
The $\g_{[0]}$ sector forms a Lie subalgebra	
\begin{alignat}{3}
	[K^0_m, K^{\pm}_n] &= \pm 2 K^{\pm}_{m+n}, & \qquad 
	[K^+_m, K^-_n] &= K^0_{m+n}, & \qquad 
	[L^0_m, L^{\pm}_n] &= \pm 2 K^{\pm}_{m+n},
	\nn \\
	[L^+_m, L^-_n] &= K^0_{m+n}, & 
	[K^0_m, L^{\pm}_n] &= \pm 2 L^{\pm}_{m+n}, &
	[K^{\pm}_m, L^{\mp}_n] &= \pm L^0_{m+n},
	\nn \\
	[L^0_m, K^{\pm}_n] &= \pm 2 L^{\pm}_{m+n}.
\end{alignat}
The relations between $ \g_{[0]}$ and $ \g_{[1]}$ are determined by commutators and anticommutators:
\begin{alignat}{3}
	[K^0_m, P^{\pm}_n] &= \pm P^{\pm}_{m+n}, &\qquad    
	[K^{\pm}_m, P^{\mp}_n] &= -P^{\pm}_{m+n}, &\qquad
	[K^0_m, Q^{\pm}_n] &= \pm Q^{\pm}_{m+n},
	\nn \\[3pt]
	[K^{\pm}_m, Q^{\mp}_n] &= -Q^{\pm}_{m+n}, & 
	\{ L^0_n, P^{\pm}_m  \} &= \pm i Q^{\pm}_{m+n}, & 
	\{  L^{\mp}_n, P^{\pm}_m \} &= -i Q^{\mp}_{m+n},
	\nn \\[3pt]
	\{ L^0_n, Q^{\pm}_m \} &= \mp i P^{\pm}_{m+n},  & 
	\{ L^{\mp}_n, Q^{\pm}_m \} &= i P^{\mp}_{m+n}
\end{alignat}
and the same applies to the $\g_{[1]}$-sector 
\begin{alignat}{3}
	\{ P^{\pm}_m, P^{\pm}_n \} &= \pm 2 K^{\pm}_{m+n}, &\qquad 
	\{P^+_m, P^-_n \} &= K^0_{m+n}, & \qquad 
	[P^{\pm}_m, Q^{\pm}_n] &= \pm 2i L^{\pm}_{m+n},
	\nn \\[3pt]
	[P^{\pm}_m, Q^{\mp}_n] &= i L^0_{m+n}, &
	\{ Q^{\pm}_m, Q^{\pm}_n\} &= \pm 2 K^{\pm}_{m+n}, & \{Q^+_m, Q^-_n\} &=K^0_{m+n}.
\end{alignat}
The elements with $m = 0 $ correspond to the ten-dimensional $\Z2$-graded Lie superalgebra \cite{Aizawa_2020,AmaAi}
\begin{equation}
	\Z2\text{-}\osp(1|2) = \{ \ K^0_0, \ K^{\pm}_0,\ P^{\pm}_0, \ Q^{\pm}_0, \ L^0_0, \ L^{\pm}_0 \  \}. 
	\label{FiniteDalg}
\end{equation}

One may introduce $\Z2$-graded derivation $ d_{\hat{m}}$ for $ \hat{m} \in \Z2$ by  
\begin{equation}
	d_{\hat{m}}(\llbracket A_k, B_n \rrbracket) = 
	\llbracket d_{\hat{m}}(A_k), B_n \rrbracket + \PH{m}{a} \llbracket A_k, d_{\hat{m}}(B_n) \rrbracket, 
	\quad A_n \in \g_a, \ B_k \in \g_b
\end{equation}
It was shown in \cite{AiSe} that $\g$ admits the [00] and [11]-graded derivations, but none of [10], [01]-graded. 
The [00]-graded derivation counts the mode as usual {\color{mycol}(Recall that $ X_m := \lambda^m \otimes X$ where $ X \in \Z2\text{-}\osp(1|2)$)}:
\begin{equation}
	[d_{00}, X_m] = mX_m, \quad X_m \in \g
\end{equation}
The action of [11]-graded derivation is determined by
\begin{alignat}{3}
	[d_{11}, K_m^0] &= m L^0_m, & \qquad [d_{11}, L_m^0] &= m K_m^0, & \qquad [d_{11}, K_m^{\pm}] &= m L^{\pm}_m,
	\notag \\[3pt]
	[d_{11}, L_m^{\pm}] &= m K_m^{\pm}, & \{d_{11}, P_m^{\pm}\} &= i m Q_m^{\pm},  & \{ d_{11}, Q_m^{\pm}\} &= -im P_m^{\pm},
	\notag \\[3pt]
	[d_{11}, d_{00}] & = 0.
\end{alignat}

The $\Z2$-loop superalgebra $\g$ admits various gradations. 
The typical example, which is used in this paper, is the homogeneous gradation defined by using the derivation $ d_{00}$:
\begin{align}
	\g = \bigoplus_{n \in \mathbb{Z}} \g_n, \qquad 
	\g_m = \{\  X \in \g \ \left| \ [d_{00}, X] = m X \right. \ \}. \label{HomoG}
\end{align}

\section{Lax operator formulation of $\Z2$-graded CH equation} \label{SEC:CHeq}

Let us consider the following Lax operators valued in $\g$:
\begin{align}
	L_x &=K_0^++\frac{1}{4}K_0^-+\alpha P^-_1+\beta Q^-_1+mK_2^-+vL_2^- \ \in \g_0 \oplus \g_1 \oplus \g_2,
	\notag \\
	L_t &= \sum_{k}\big( \mathsf{D}^{2k} + \mathsf{D}^{2k+1} \big), 
	\label{LaxOp}
\end{align}
where
\begin{align}
	\mathsf{D}^{2k}&:= \mathsf{a}_{2k}K^+_{2k}+ \mathsf{b}_{2k}K^{-}_{2k}+ \mathsf{c}_{2k}K^0_{2k}+ \mathsf{f}_{2k}L^+_{2k}+ \mathsf{g}_{2k}L^{-}_{2k}+ \mathsf{h}_{2k}L^0_{2k} \ \in \g_{2k},\\
	\mathsf{D}^{2k+1}&:=\upxi_{2k+1}P^+ _{2k+1}+\upeta_{2k+1}P^- _{2k+1}+\uprho_{2k+1}Q^+_{2k+1}+\upsigma_{2k+1}Q^-_{2k+1} \ \in \g_{2k+1}.
\end{align}
The coefficients of the basis elements of $\g$, such as $\alpha$ and $\beta$, are functions on $(1+1)$-dimensional spacetime taking values in a $\Z2$-graded commutative algebra. Each coefficient has the same degree as the corresponding basis element of $\g$, so that the Lax operators $L_x$ and $L_t$ are $[00]$-graded. 
In particular, the coefficients in $L_x$ will play the role of dynamical variables of our theory.  We therefore specify their degrees explicitly:
\[
   \alpha(t,x) \quad [10], \qquad \beta(t,x) \quad  [01], \qquad m(t,x) \quad [00], \qquad v(t,x) \quad [11].
\]
The $\Z2$-graded commutativity implies that $m(t,x)$ commutes with all other variables and 
\begin{align}
	\alpha^2 = \beta^2 = [\alpha, \beta] = \{\alpha,v\} = \{\beta, v\} = 0.
		\label{RelationDV}
\end{align}

 Once $L_t$ is fixed, namely, once the range of $k$ is specified, the zero-curvature equation determines the coefficients in $L_t$ in terms of those in $L_x$, thereby yielding an evolution equation. 
To obtain a $\Z2$-graded extension of the CH equation, we choose $L_t$ as follows:
\begin{equation}
	L_t = \mathsf{D}^{-2} + \mathsf{D}^{-1} + \mathsf{D}^0 + \mathsf{D}^1 + \mathsf{D}^2.
\end{equation}
Then, imposing the zero-curvature equation
\begin{equation}
	\partial_t L_x - \partial_x L_t + [L_x, L_t] = 0.
\end{equation}
we can determine the nonvanishing coefficients: 
\begin{alignat}{3}
	\mathsf{a}_{-2}&=\frac{1}{2}, & \mathsf{a}_0 &= -u,
	\notag \\
	\mathsf{b}_{-2} &= \frac{1}{8}, & \mathsf{b}_0 &= \frac{1}{4}(u+\psi\psi'+\chi\chi'), &\qquad  \mathsf{b}_2 &=-mu-vw,
	\notag\\
	\mathsf{c}_0 &=\frac{1}{2}u', & \mathsf{f}_0 &= -w, & \mathsf{g}_0 &=\frac{1}{4}(w-i\psi\chi'+i\psi'\chi),
	\notag \\
	\mathsf{g}_2 &=-mw-vu, & \mathsf{h}_0 &=\frac{1}{2}w',
	\notag \\
	\upxi_{-1} &=-\frac{1}{2}\psi', & \uprho_{-1} &=-\frac{1}{2}\chi', & \upeta_{-1} &=\frac{1}{8}\psi,
	\notag \\
	\upeta_1 &=-u\alpha-iw\beta, &\qquad \upsigma_{-1} &=\frac{1}{8}\chi, & \upsigma_1 &=-u\beta+iw\alpha,
\end{alignat}
where we introduced new $\Z2$-graded commutative variables
\begin{equation}
	u(t,x) \quad [00], \qquad w(t,x) \quad [11], \qquad \psi(t,x) \quad [10], \qquad \chi(t,x) \quad [01].
\end{equation}
Namely, $u(t,x)$ commutes with all other variables and
\begin{equation}
	\{w, \psi\} = \{w,\chi\} = [\psi, \chi] = \psi^2 = \chi^2 = 0.
\end{equation}
With these nonvanishing coefficients, the evolution equations derived from the zero-curvature equation are given by
\begin{align}
	\left\{
	\begin{aligned}
		\dot{m}+m'u+2mu'+v'w+2vw'&=0,\\
		\dot{v}+m'w+2mw'+v'u+2vu'&=0,\\
		\dot{\alpha}+\alpha'u+\frac{3}{2}\alpha u'
		+iw\beta'+\frac{3i}{2}w'\beta
		+\frac{1}{2}m\psi'+\frac{i}{2}v\chi'&=0,\\
		\dot{\beta}+\beta'u+\frac{3}{2}\beta u'
		-iw\alpha'-\frac{3i}{2}w'\alpha
		+\frac{1}{2}m\chi'-\frac{i}{2}v\psi'&=0.
	\end{aligned}
	\right.\label{z2z2ch}
\end{align}
where the relation between $(m,v,\alpha,\beta)$ and $(u,w,\psi,\chi)$ is given by
\begin{align}
	m &=u-u''+\frac{3}{4}(\psi\psi'+\chi\chi')+(\psi'\psi''+\chi'\chi''),
	\notag\\
	v &=w-w''+\frac{3i}{4}(\psi'\chi-\psi\chi')+i(\psi''\chi'-\psi'\chi''),
	\notag\\
	\alpha &=\frac{1}{4}\psi-\psi'',
	\notag \\
	\beta &=\frac{1}{4}\chi-\chi'' \label{momentum}.
\end{align}
The dot and the prime denote time and space derivatives, respectively; for example,  $\dot{m} = \partial_t m, u' = \partial_x u.$ 
As can be seen from \eqref{momentum}, the variables $m, v, \alpha$ and $ \beta $ form a $\Z2$-graded generalization of the momentum density of the CH equation. 
We refer to  this system of equations as the \textit{$\Z2$-CH equation}. 
The equation \eqref{z2z2ch} admits the following expression:
\begin{equation}
	\begin{pmatrix}
		\dot{m} \\ \dot{v} \\ \dot{\alpha} \\ \dot{\beta}
	\end{pmatrix}  
	=-
	\begin{pmatrix}
		\partial_x m + m\partial_x & \partial_x v + v \partial_x & 0 & 0
		\\[3pt]
		\partial_x v + v \partial_x & \partial_x m + m\partial_x & 0 & 0
		\\[3pt]
		\partial_x \alpha + \frac{1}{2}\alpha \partial_x & - i(\partial_x \beta + \frac{1}{2}\beta \partial_x) & \frac{1}{2}m & \frac{i}{2}v
		\\[3pt]
		\partial_x \beta + \frac{1}{2}\beta \partial_x & i(\partial_x \alpha + \frac{1}{2}\alpha \partial_x) & -\frac{i}{2}v & \frac{1}{2}m
	\end{pmatrix}
	\begin{pmatrix}
		u \\ w \\ \psi' \\ \chi'
	\end{pmatrix} 
	\equiv J_1 
	\begin{pmatrix}
		u \\ w \\ \psi' \\ \chi'
	\end{pmatrix}. \label{matrixch}
\end{equation}

Setting all the nontrivially graded variables, namely those of degrees $[11]$, $[10]$, and $[01]$, to zero, the $\Z2$-CH equation reduces to the ordinary CH equation. 
Setting the variables of degrees [11] and [10] (or [11] and [01]) to zero, the $\Z2$-CH equation reduces to the super CH equation introduced in \cite{GengXueWu}.  

Although we focused only on the $\Z2$-graded extension of the CH equation, extending the range of $L_t$ would yield, via the zero-curvature equation, a hierarchy containing the $\Z2$-CH equation, similar to the CH hierarchy discussed in \cite{Qiao}.

%
\section{Stationary zero-curvature equation} \label{SEC:SZCE}

In the next section, we present the bi-Hamiltonian structure of the $\Z2$-CH equation. 
As developed in \cite{Tu1986,Tu1989,Tu1989trace}, the stationary zero-curvature equation provides a powerful tool for constructing bi-Hamiltonian structures of integrable systems. Its super extension ($\mathbb{Z}_2$-graded extension) has been developed and applied to various integrable systems \cite{SupeTraceId,SuperKdv,GengXueWu,LiTan,SupeDP}. 
In this section, we extend this construction to the $\Z2$-graded setting.

For the Lax operator $L_x$ given in \eqref{LaxOp}, the stationary zero-curvature equation is given by
\begin{equation}
	\partial_xW=[L_x,W], \label{SZCE}
\end{equation}
where  $W \in \g $ has the same structure as $L_t$:
\begin{align}
	W &=\sum_{k\in \mathbb Z}(D^{-2k}+D^{-2k-1}),
	\notag \\
	D^{-2k} &=a_{-2k}K^+_{-2k}+b_{-2k}K^-_{-2k}+c_{-2k}K^0_{-2k}+f_{-2k}L^+_{-2k}+g_{-2k}L^-_{-2k}+h_{-2k}L^0_{-2k},
	\notag \\
	D^{-2k-1}&=\xi_{-2k-1}P^+_{-2k-1}+\eta_{-2k-1}P^-_{-2k-1}+\rho_{-2k-1}Q^+_{-2k-1}+\sigma_{-2k-1}Q^-_{-2k-1}. 
	\label{Wdef}
\end{align}
Here the coefficients of the basis elements of $\g$ are functions $(t,x)$ taking values in a $\Z2$-graded commutative algebra.

The stationary zero-curvature equation \eqref{SZCE} gives the following relations for the $\Z2$-graded commutative functions:
\begin{align}
	a'_{-2n} &=-2c_{-2n},
	\notag\\
	b'_{-2n} &=\frac{1}{2}c_{-2n}-2(\eta_{-2n-1}\alpha +\sigma_{-2n-1}\beta)+2(mc_{-2n-2}+vh_{-2n-2}),
	\notag\\
	c'_{-2n} &=b_{-2n}-\frac{1}{4}a_{-2n}+\xi_{-2n-1}\alpha+\rho_{-2n-1}\beta -ma_{-2n-2}-vf_{-2n-2},
	\notag\\
	f'_{-2n} &=-2h_{-2n},
	\notag\\
	g'_{-2n} &=\frac{1}{2}h_{-2n}+2i(\eta_{-2n-1}\beta -\sigma_{-2n-1}\alpha)+2(mh_{-2n-2}+vc_{-2n-2}),
	\notag\\
	h'_{-2n} &=g_{-2n}-\frac{1}{4}f_{-2n}+i(\rho_{-2n-1}\alpha-\xi_{-2n-1}\beta)-mf_{-2n-2}-va_{-2n-2},
	\notag\\
	\xi'_{-2n-1} &=-\eta_{-2n-1}+a_{-2n-2}\alpha+if_{-2n-2}\beta,
	\notag\\
	\eta'_{-2n-1} &=-\frac{1}{4}\xi_{-2n-1}+c_{-2n-2}\alpha+ih_{-2n-2}\beta+i\rho_{-2n-3}v-\xi_{-2n-3}m,
	\notag\\
	\rho_{-2n-1}'&=-\sigma_{-2n-1}+a_{-2n-2}\beta -if_{-2n-2}\alpha,
	\notag\\
	\sigma'_{-2n-1} &=-\frac{1}{4}\rho_{-2n-1}+c_{-2n-2}\beta-ih_{-2n-2}\alpha-\rho_{-2n-3}m-i\xi_{-2n-3}v.
\end{align}
These relations reduce to the recurrence relations  for $a_{-2k},f_{-2k},\xi_{-2k-1}$ and $ \rho_{-2k-1}$:
\begin{align}
	&\frac12 K a_{-2n}
	+A\xi_{-2n-1}
	+B\rho_{-2n-1}
	=
	M a_{-2n-2}
	+ V f_{-2n-2},
	\nonumber\\[1ex]
	&\frac12 K f_{-2n}
	+iB\xi_{-2n-1}
	-iA\rho_{-2n-1}
	=
	Va_{-2n-2} + M f_{-2n-2},
	\nonumber\\[1ex]
	&L \xi_{-2n-1}
	= \frac{1}{2}A^{\dagger} a_{-2n-2}
	-\frac{i}{2} B^{\dagger} f_{-2n-2}
	-m\,\xi_{-2n-3}
	-i v\,\rho_{-2n-3},
	\nonumber\\[1ex]
	&L\rho_{-2n-1}
	=
	\frac{1}{2} B^{\dagger} a_{-2n-2}
	+ \frac{i}{2}A^{\dagger} f_{-2n-2}
	+i v\,\xi_{-2n-3}
	-m\,\rho_{-2n-3},
	\label{SZCrecrel}
\end{align}
where the $\Z2$-graded operators are defined by
\begin{alignat}{3}
	K &=  \partial_x - \partial_x^3, & \qquad L &= \frac{1}{4}-\partial_x^2, & \qquad M &=-(m\partial_x + \partial_x m),
	\notag \\
	V &= -(v\partial_x + \partial_x v), & A &= 2\alpha \partial_x + \partial_x \alpha, & 
	B&= 2\beta \partial_x + \partial_x \beta,
	\label{KLMAB}
\end{alignat}
and their adjoints are determined by the following definition:
%
\begin{dfn}
The adjoint ${\mathcal O}^{\dagger}$ of a $\Z2$-graded operator  $\mathcal{O}$ is defined by
\begin{equation}
	\int f \mathcal{O} g dx = (-1)^{\hat{f}\cdot \hat{\mathcal{O}}} \int (\mathcal{O}^{\dagger}f) g dx \label{DEF:ajoint}
\end{equation}
for all $\Z2$-graded functions $ f$ and $g $ of homogeneous degree  $ \hat{f}$ and $ \hat{g}.$ 
\end{dfn}
Note that, unlike the usual adjoint operation,  the definition of the adjoint does not involve complex conjugation. 
It follows from this definition that
\begin{align}
	(\mathcal{O}_1 \mathcal{O}_2)^\dagger &= (-1)^{\hat{\mathcal{O}}_1 \cdot \hat{\mathcal{O}}_2}(\mathcal{O}_2^\dagger \mathcal{O}_1^\dagger) 
	\notag 
	\\
	(\mathcal{O}^\dagger)^\dagger &=\mathcal{O}. \label{daggero}
\end{align}
It is also straightforward to verify that the adjoints of the operators in \eqref{KLMAB} are given by
\begin{alignat}{2}
	K^{\dagger} &= -K, \qquad L^{\dagger} = L, &\quad M^{\dagger} &= -M, \quad V^{\dagger} =-V,
	\notag\\
	A^{\dagger} &= -(\alpha\partial_x + 2\partial_x \alpha), &
	B^{\dagger} &= -(\beta\partial_x + 2\partial_x \beta).
\end{alignat}

The recurrence relation \eqref{SZCrecrel} can be written in terms of  the matrix differential operator $J_1$ defined in \eqref{matrixch} and $J_2$ defined below: 
\begin{equation}
	J_1
	\begin{pmatrix}
		a_{-2n-2}\\
		f_{-2n-2}\\
		2\xi_{-2n-3}\\
		2\rho_{-2n-3}
	\end{pmatrix}
	=
	J_2
	\begin{pmatrix}
		a_{-2n}\\
		f_{-2n}\\
		2\xi_{-2n-1}\\
		2\rho_{-2n-1}
	\end{pmatrix} \label{rec_rel}
\end{equation}
where
\begin{align}
	J_1&=
	\begin{pmatrix}
		M & V & 0 & 0
		\\[2mm]
		V & M & 0 & 0
		\\[2mm]
		\frac{1}{2}A^\dagger & -\frac{i}{2}B^\dagger
		& -\frac{1}{2}m
		& -\frac{i}{2}v
		\\[2mm]
		\frac{1}{2}B^\dagger & \frac{i}{2}A^\dagger
		& \frac{i}{2}v
		& -\frac{1}{2}m
	\end{pmatrix},
	\qquad 
	J_2=\frac{1}{2}
	\begin{pmatrix}
		K & 0 & A & B
		\\[2mm]
		0 & K & iB & -iA
		\\[2mm]
		0 & 0 & L & 0
		\\[2mm]
		0 & 0 & 0 & L
	\end{pmatrix}.
\end{align}
One may verify that one of the solutions of the recurrence relation \eqref{rec_rel} is given by
\begin{equation}
	J_1
	\begin{pmatrix}
		\frac12\\
		0\\
		0\\
		0
	\end{pmatrix}
	=
	J_2
	\begin{pmatrix}
		-u\\
		-w\\
		-\psi'\\
		-\chi'
	\end{pmatrix}.
\end{equation}

We further rewrite the recurrence relation \eqref{rec_rel} using the following relation: 
\begin{align}
	\begin{pmatrix}
		a_{-2n} \\ f_{-2n} \\ 2\xi_{-2n-1} \\ 2\rho_{-2n-1}
	\end{pmatrix}
	&=
	\begin{pmatrix}
		2K^{-1}M & 2K^{-1}V & -K^{-1}A & -K^{-1}B
		\\
		2K^{-1}V & 2K^{-1}M & -iK^{-1}B & iK^{-1}A
		\\
		0 & 0 & 1 & 0
		\\
		0 & 0 & 0 & 1
	\end{pmatrix}
	\begin{pmatrix}
		a_{-2n-2} \\ f_{-2n-2} \\ 2\xi_{-2n-1} \\ 2\rho_{-2n-1}
	\end{pmatrix}
	\notag \\
	&\equiv S    
	\begin{pmatrix}
		a_{-2n-2} \\ f_{-2n-2} \\ 2\xi_{-2n-1} \\ 2\rho_{-2n-1}
	\end{pmatrix} \label{mat_S} .
\end{align}
This relation follows from the first two relations in \eqref{SZCrecrel}. 
Substituting \eqref{mat_S} into \eqref{rec_rel}, we obtain the new recurrence relation
\begin{align}
	P_1 \begin{pmatrix}
		a_{-2n-4}\\
		f_{-2n-4}\\
		2\xi_{-2n-3}\\
		2\rho_{-2n-3}
	\end{pmatrix}
	=
	P_2 \begin{pmatrix}
		a_{-2n-2}\\
		f_{-2n-2}\\
		2\xi_{-2n-1}\\
		2\rho_{-2n-1}
	\end{pmatrix}. \label{recurrence}
\end{align}
Here the new matrix differential operators are defined by
\begin{align}
	P_1&:= J_1S=
	\begin{pmatrix}
		p_{1} &p_{2} &p_{3} &p_{4}\\
		p_{2} &p_{1} &ip_{4} &-ip_{3}\\
		p_{5} &p_{6} &p_{7} &p_{8}\\
		ip_{6} &ip_{5} &-p_{8} &p_{7}
	\end{pmatrix},
	\notag \\[3pt]
	P_2&:= J_2 S= 
	\begin{pmatrix}
		M&V&0&0\\
		V&M&0&0\\
		0&0&\frac12 L&0\\
		0&0&0&\frac12 L
	\end{pmatrix}, \label{P1P2def}
\end{align}    
where
\begin{align}
	p_{1}&= 2MK^{-1}M+2VK^{-1}V, &
	p_{2}&= 2VK^{-1}M+2MK^{-1}V, \notag\\
	p_{3}&= -MK^{-1}A-iVK^{-1}B, &
	p_{4}&= -MK^{-1}B +iVK^{-1}A,\notag\\
	p_{5}&= A^\dag K^{-1}M -iB^\dag K^{-1}V, & 
	p_{6}&= A^\dag K^{-1}V -iB^\dag K^{-1}M,\notag\\ 
	p_{7}&= -\frac12 A^\dag K^{-1}A -\frac12 B^\dag K^{-1}B-\frac12 m, & 
	p_{8}&= -\frac12 A^\dag K^{-1}B +\frac12 B^\dag K^{-1} A -\frac{i}{2}v.
\end{align}

In terms of the operators $P_1$ and $ P_2$, the $\Z2$-CH equation takes the form
\begin{align}
	\begin{pmatrix}
		\dot{m}\\ \dot{v} \\ \dot{\alpha} \\ \dot{\beta}
	\end{pmatrix}
	&=P_1
	\begin{pmatrix}
		-\frac12\\0\\\psi'\\\chi'
	\end{pmatrix}
	\notag \\
	&=P_2
	\begin{pmatrix}
		u\\
		w\\
		-2L^{-1}\left(\alpha' u +\frac32 \alpha u' +iw\beta' +\frac{3i}{2} w'\beta + \frac{1}{2}m\psi' +\frac{i}{2}v\chi'\right)\\
		-2L^{-1}\left(\beta' u +\frac32 \beta u' -iw\alpha' -\frac{3i}2 w'\alpha + \frac{1}{2}m\chi' -\frac{i}{2}v\psi'\right)
	\end{pmatrix}. \label{EOM_P}
\end{align}

The main consequences of this section are the recurrence relation \eqref{recurrence} and the $\Z2$-CH equation in the form given by \eqref{EOM_P}. 
We will show in the next section that the operators $P_1 $ and $P_2$ define Poisson structures.

\section{Bi-Hamiltonian structure of $\Z2$-CH equation} \label{SEC:BHS}

In this section, we present the bi-Hamiltonian structure of the $\Z2$-CH equation  and use it to demonstrate the existence of infinitely many conserved quantities of degrees $[00]$ and $[11]$. 
We also show that all these conserved quantities are in involution. 
Our strategy is to generalize the theory developed by Olver \cite{Olver1993applications} to the $\Z2$-graded framework. The corresponding generalization to the $\mathbb{Z}_2$-graded (super) case was already developed by Mathieu \cite{Mathieu1988}. 
We use the same terminology as in \cite{Olver1993applications, Mathieu1988} for the $\Z2$-graded generalization. However, we modify the notation to suit the $\Z2$-graded setting.

\subsection{Hamiltonian functionals}

We introduce the following notation for the momentum densities:
\begin{align}
	\mathsf{m} = (m_1,m_2,m_3,m_4)^T :=(m,v,\alpha,\beta)^T.
\end{align}
We then consider the following $\Z2$-graded jet space with coordinates
\begin{eqnarray}
	(t,x,m_i, m_i^{(n)} ), \quad m_i^{(n)}:=\partial_x^n\, m_i, \quad n = 1,2,\dots
\end{eqnarray}
where $i = 1, \dots, 4.$ Here, $ m_i = m_i^{(0)}$ is understood. 
When no confusion arises, we denote the degree of $m_i^{(n)}$ simply by $\hat{i}.$ 
The algebra generated by the coordinates of this jet space forms a $\Z2$-graded commutative algebra, which naturally generalizes the coordinate algebra of the jet space used in the study of PDEs.

Now we define the left derivative by
\begin{align}
	\overrightarrow{\frac{\partial}{\partial m_i^{(n)}}}\, m_j^{(k)} = \delta_{ij}\, \delta^{nk},
\end{align}
and for homogeneous functions $ F, G$ on the jet space
\begin{eqnarray}
	\overrightarrow{\frac{\partial}{\partial m_i^{(n)}}}\, FG = \left(\overrightarrow{\frac{\partial}{\partial m_i^{(n)}}}F \right) G 
	+ (-1)^{\hat{i}\cdot \hat{F}} F  \overrightarrow{\frac{\partial}{\partial m_i^{(n)}}}\, G.
\end{eqnarray}
The right derivative is defined similarly:
\begin{align}
	m_j^{(k)} 
	\overleftarrow{\frac{\partial}{\partial m_i^{(n)}}}  &= \delta_{ij} \,\delta^{nk},
	\notag \\
	FG \overleftarrow{\frac{\partial}{\partial m_i^{(n)}}} &= 
	F \left( G \overleftarrow{\frac{\partial}{\partial m_i^{(n)}}} \right)  
	+ (-1)^{\hat{G}\cdot \hat{i}} \left(F \overleftarrow{\frac{\partial}{\partial m_i^{(n)}}}\right) G.
\end{align}

\begin{dfn}
	We define the left and right Euler operators with respect to $m_i$, denoted by $L_i$ and $R_i$, respectively, as
\begin{align}
	L_i=\sum_{n\ge 0} \overrightarrow{(-\partial_x)^n\frac{\partial}{\partial m_i^{(n)}}},\qquad
	R_i=\sum_{n\ge 0} \overleftarrow{\frac{\partial}{\partial m_i^{(n)}}(-\partial_x)^n}. 
	\label{DefEuler}
\end{align}	
\end{dfn}

\begin{prop} \label{Prop:Euler}
	For a homogeneous function $F$ on the jet space, we have 
\begin{align}
	R_i (F)=(-1)^{(\hat i+\hat F)\cdot \hat i} L_i(F) \label{lemmaeuler}
\end{align}
\end{prop}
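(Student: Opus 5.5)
The plan is to reduce the statement to a relation between the left and right partial derivatives with respect to a single jet coordinate $m_i^{(n)}$, and then sum over $n$. The key intermediate claim is the following. For every homogeneous $F$ and every $n\ge 0$,
\begin{equation}
	F\overleftarrow{\frac{\partial}{\partial m_i^{(n)}}} = (-1)^{(\hat i+\hat F)\cdot \hat i}\, \overrightarrow{\frac{\partial}{\partial m_i^{(n)}}}F. \label{propLR}
\end{equation}
The sign $(-1)^{(\hat i+\hat F)\cdot\hat i}$ is exactly the sign for moving the degree-$\hat i$ derivative past the remaining factor $\partial F/\partial m_i^{(n)}$, whose degree is $\hat F+\hat i$.

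I would prove \eqref{propLR} by induction on the number of generator factors in a monomial, then extend by linearity. For a general homogeneous $F$ on the jet space, I would decompose it into homogeneous monomials of the same degree $\hat F$, or argue directly via the product rule, so that linearity suffices.

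The base case is immediate. For a single generator $F=m_j^{(k)}$ both sides equal $\delta_{ij}\delta^{nk}$. When they are nonzero we have $\hat F=\hat i$, so the sign is $(-1)^{0}=1$.

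For the inductive step, take $F=GH$ with $G,H$ homogeneous. I would apply the right Leibniz rule and substitute the induction hypothesis for $G$ and for $H$. Then I would reorder each product using $\Z2$-graded commutativity, $XY=(-1)^{\hat X\cdot\hat Y}YX$, noting that $\partial G/\partial m_i^{(n)}$ has degree $\hat G+\hat i$. This gives two terms to compare with the left Leibniz expansion of $\overrightarrow{\partial}(GH)$, multiplied by $(-1)^{(\hat i+\hat G+\hat H)\cdot\hat i}$. Matching them is a bookkeeping of exponents modulo 2 using bilinearity of the dot product. For instance, the term $G\,(H\overleftarrow{\partial})$ becomes $(-1)^{(\hat i+\hat H)\cdot\hat i}(-1)^{\hat G\cdot(\hat H+\hat i)}(\overrightarrow{\partial}H)G$. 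Commuting the two factors back produces $(-1)^{(\hat i+\hat H)\cdot\hat i}G\,\overrightarrow{\partial}H$. Comparing with the target coefficient $(-1)^{(\hat i+\hat G+\hat H)\cdot\hat i}(-1)^{\hat i\cdot\hat G}=(-1)^{(\hat i+\hat H)\cdot\hat i}$ shows agreement. The other term is checked the same way.

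Next I would pass from partial derivatives to the Euler operators. The operator $\partial_x$ has degree $[00]$, so it commutes with all signs and preserves degree. I interpret $\overleftarrow{\frac{\partial}{\partial m_i^{(n)}}(-\partial_x)^n}$ as first taking the right derivative and then applying $(-\partial_x)^n$. Both $L_i F$ and $R_i F$ are then sums over $n$ of $(-\partial_x)^n$ applied to the respective partials. Applying $(-\partial_x)^n$ to \eqref{propLR} and summing over $n$ gives \eqref{lemmaeuler}.

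I expect the main obstacle to be the sign bookkeeping in the inductive step, since there are two Leibniz rules with different sign conventions. I would organize it by checking each term's exponent modulo 2 separately. A second point to make precise is that the ordering convention in $R_i$ (derivative first, then $(-\partial_x)^n$) is what the paper intends; with it, the step from \eqref{propLR} to \eqref{lemmaeuler} is trivial.
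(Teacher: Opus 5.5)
Your proof is correct, but the key single-coordinate identity $F\overleftarrow{\partial}=(-1)^{(\hat i+\hat F)\cdot\hat i}\,\overrightarrow{\partial}F$ is reached differently from the paper. Here $\overleftarrow{\partial}$ and $\overrightarrow{\partial}$ denote the right and left derivatives with respect to $m_i^{(n)}$.

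The paper does not induct. It moves every factor of $m_i^{(n)}$ to the right, writing $F$ (monomial by monomial) as $G\,(m_i^{(n)})^k$ with $G$ free of $m_i^{(n)}$ and $\deg G=\hat F+k\hat i$. It then computes directly $F\overleftarrow{\partial}=kG(m_i^{(n)})^{k-1}$ and $\overrightarrow{\partial}F=(-1)^{\hat i\cdot\hat G}kG(m_i^{(n)})^{k-1}$. To match signs it needs the extra observation $k\,\hat i\cdot\hat i\equiv\hat i\cdot\hat i$. This holds because a coordinate with $\hat i\cdot\hat i=1$ squares to zero, which forces $k=1$.

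Your induction instead shows that the twisted left derivative $X\mapsto(-1)^{(\hat i+\hat X)\cdot\hat i}\,\overrightarrow{\partial}X$ obeys the right Leibniz rule. So it agrees with $\overleftarrow{\partial}$ on all products once it agrees on generators. This avoids the normal form, the computation of derivatives of powers, and any appeal to nilpotency, since the two Leibniz rules handle that automatically. The cost is that it is slightly less explicit than the paper's one-line computation.

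No reordering by graded commutativity is needed in your inductive step. After you substitute the induction hypothesis, $G(H\overleftarrow{\partial})$ carries the coefficient $(-1)^{(\hat i+\hat H)\cdot\hat i}$. Likewise $(-1)^{\hat H\cdot\hat i}(G\overleftarrow{\partial})H$ carries $(-1)^{(\hat i+\hat G+\hat H)\cdot\hat i}$. These are exactly the coefficients required by the left Leibniz expansion multiplied by $(-1)^{(\hat i+\hat G+\hat H)\cdot\hat i}$, so your commute-and-commute-back step is vacuous.

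The passage from partial derivatives to the Euler operators, using that $\partial_x$ has degree $[00]$ and that the sign is independent of $n$, is the same in both arguments.
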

\begin{proof}
	It is sufficient to show that
	\begin{eqnarray}
		F\, \overleftarrow{\frac{\partial}{\partial m_i^{(n)}}} = 
		(-1)^{(\hat i + \hat F)\cdot \hat i} \overrightarrow{\frac{\partial}{\partial m_i^{(n)}}}\, F. \label{RLcomp}
	\end{eqnarray}
	Suppose that the right derivatives of $F$ do not vanish. By bringing all $m_i^{(n)}$ to the right,  $F$ can be written as
	\begin{equation}
		F = G (m_i^{(n)})^k, \qquad \deg G = \hat F + k\hat i
	\end{equation}
	where  $k$ is a positive integer and a function $G$ is independent of $ m_i^{(n)}.$ 
	It follows that
	\begin{eqnarray}
		F\, \overleftarrow{\frac{\partial}{\partial mi^{(n)}}} = k G (m_i^{(n)})^{k-1}.
	\end{eqnarray}
	By the definition of the left derivative, we have
	\begin{align}
		\overrightarrow{\frac{\partial}{\partial m_i^{(n)}}}\, F 
		= 
		(-1)^{\hat i \cdot \hat G} k G  (m_i^{(n)})^{k-1}.
	\end{align}
	Since $ k\hat i \cdot \hat i = \hat i \cdot \hat i \mod(2,2),$ we have $\hat i \cdot \hat G = \hat i \cdot (\hat F + \hat i) \mod(2,2),$ which proves \eqref{RLcomp}.
\end{proof}

Let $ a(x,\mathsf{m},\mathsf{m}',\dots)$ be a homogeneous function depending on $x, \, \mathsf{m}$ and its derivatives on $x$ up to finite order on the jet space. Then 
\begin{eqnarray}
	A[\mathsf{m}] = \int dx\, a (x,\mathsf{m},\mathsf{m}',\dots)
\end{eqnarray}
is a functional having the same degree as $a$. 
\begin{dfn} \label{Def:FuncDiff}
	We define the left and right functional derivatives of $A$ with respect to $m_i$ in terms of the left and right Euler operator, respectively, by
	\begin{equation}
		\frac{\delta^L A}{\delta m_i}=L_i(a),\qquad
		\frac{\delta^R A}{\delta m_i}=R_i(a)
		.
	\end{equation}
\end{dfn}
The following relation  follows immediately from Proposition \ref{Prop:Euler}:
\begin{equation}
	\frac{\delta^R A}{\delta m_i} = (-1)^{(\hat i + \hat A)\cdot \hat i} \,\frac{\delta^L A}{\delta m_i}. 
	\label{RLfunctionalder}
\end{equation}

To introduce Hamiltonian functionals, we now turn to a matrix representation of the operators $L_x$ in \eqref{LaxOp} and $ W $ in \eqref{Wdef}.  
Introducing the loop parameter $\lambda$, we obtain a realization of $\g$:
\[
  K^{\pm}_{-2k} = K^{\pm}_0 \otimes \lambda^{-2k}, \quad P^{\pm}_{-2k-1} = P^{\pm}_0 \otimes \lambda^{-2k-1}, \;\dots
\]
Recall that $\{ \ K^0_0, \ K^{\pm}_0,\ P^{\pm}_0, \ Q^{\pm}_0, \ L^0_0, \ L^{\pm}_0 \  \}$ forms a basis of the ten-dimensional algebra $\mathfrak{k} := \Z2$-$\osp(1|2)$; see \eqref{FiniteDalg}. 
In terms of this basis, $W$ takes the form
\begin{align}
	W &= aK^+ + bK^- + cK^0 + f L^+ + g L^- + h L^0
	\notag \\
	  & + \xi P^+ + \eta P^- + \rho Q^+ + \sigma Q^-
\end{align}
where 
\begin{align}
	a := \sum_k a_{-2k} \lambda^{-2k}, \qquad 
	b := \sum_k b_{-2k} \lambda^{-2k}, \quad
	\xi := \sum_k \xi_{-2k-1} \lambda^{-2k-1}, \ \mathrm{etc.}
\end{align}
Here the subscript "0" of the basis of $\mathfrak{k}$ is omitted for simplicity of notation.

The algebra $\mathfrak{k}$ has a six-dimensional representation \cite{AFITTsuper}. 
First, observe that $\mathfrak{k}$ admits the triangular decomposition:
\begin{align}
	\mathfrak{k} &= \mathfrak{k}_+ \oplus \mathfrak{k}_0 \oplus \mathfrak{k}_-,
	\notag \\
	\mathfrak{k}_+ &= \mathrm{lin. sp.}\langle\; K^+, L^+, P^+, Q^+ \; \rangle,
	\notag \\
	\mathfrak{k}_0 &= \mathrm{lin. sp.}\langle\; K^0, L^0 \; \rangle,
	\notag \\
	\mathfrak{k}_- &= \mathrm{lin. sp.}\langle\; K^-, L^-, P^-, Q^- \; \rangle.
\end{align}
Based on this decomposition, we define the [00] and [11]-graded lowest weight vectors, denoted by $\ket{00}, \ket{11}$, respectively, by
\begin{align}
	P^-_0 \ket{00} &= Q^-_0 \ket{00} = P^-_0 \ket{11} = Q^-_0\ket{11} = 0,
	\notag \\[3pt]
	K^0_0 \ket{00} &= -\ket{00}, \quad K^0_0\ket{11} = -\ket{11},
	\quad
	L^0_0 \ket{00} = -\ket{11}, \quad L^0_0\ket{11} = -\ket{00}. \label{LWvec}
\end{align}
These vectors span a two-dimensional irreducible representation of the Borel subalgebra of  $\mathfrak{k}.$ 

We then consider a six-dimensional $\Z2$-graded vector space induced from this representation. Its basis elements and degrees are given by
\begin{alignat}{3}
	&[00] & \qquad \ket{1} &:= K^+ \ket{00}, &\qquad \ket{2} &:= \ket{00}, 
	\notag \\
	&[11] & \qquad \ket{3} &:= L^+ \ket{00}, &\ket{4} &:= \ket{11}, 
	\notag \\ 
	&[10] & \quad \ket{5} &:= P^+ \ket{00}, 
	\notag \\
	&[01] & \quad \ket{6} &:= Q^+ \ket{00}. \label{6DrepBasis}
\end{alignat}
Let $ \vec{v} = (\; \ket{1}, \ket{2}, \ket{3}, \ket{4}, \ket{5}, \ket{6}  \;) $ and 
define the action of $\mathfrak{k}$ on this basis by
\begin{align}
	\vec{v} & \stackrel{K^0}{\longrightarrow} (\; \ket{1}, -\ket{2}, \ket{3},-\ket{4},0, 0 \;),
	\notag \\
	\vec{v} & \stackrel{K^+}{\longrightarrow} (\; 0, \ket{1}, 0, \ket{3}, 0, 0 \;),
	\notag \\
	\vec{v} & \stackrel{K^-}{\longrightarrow} (\; \ket{2}, 0, \ket{4}, 0, 0, 0 \;),
	\notag \\
	\vec{v} & \stackrel{L^0}{\longrightarrow} (\; \ket{3}, -\ket{4}, \ket{1}, -\ket{2}, 0, 0 \;),
	\notag \\
	\vec{v} & \stackrel{L^+}{\longrightarrow} (\; 0, \ket{3}, 0, \ket{1}, 0, 0 \;),
	\notag \\
	\vec{v} & \stackrel{L^-}{\longrightarrow} (\; \ket{4}, 0, \ket{2}, 0, 0, 0 \;),
	\notag \\
	\vec{v} & \stackrel{P^+}{\longrightarrow} (\; 0, \ket{5}, 0, -i\ket{6}, \ket{1}, i\ket{3} \;),
	\notag \\
	\vec{v} & \stackrel{P^-}{\longrightarrow} (\; \ket{5}, 0, -i\ket{6}, 0, -\ket{2}, -i\ket{4} \;),
	\notag \\
	\vec{v} & \stackrel{Q^+}{\longrightarrow} (\; 0, \ket{6}, 0, i\ket{5}, -i\ket{3}, \ket{1} \;),
	\notag \\
	\vec{v} & \stackrel{Q^-}{\longrightarrow} (\; \ket{6}, 0, i\ket{5}, 0, i\ket{4}, -\ket{2} \;). 
	\label{ActionGenerators}
\end{align}
This defines the six-dimensional representation of $\mathfrak{k}. $ 

In this representation, the operators $L_x$ and $W$ are represented by the following matrices:
\begin{align}
	L_x &= \left(
	\begin{array}{cc:cc:c:c}
		0 & 1 & 0 & 0 & 0 & 0 \\
		\frac{1}{4} + \lambda^2 m & 0 & \lambda^2 v & 0 & -\lambda \alpha & -\lambda \beta \\[2pt] \hdashline
		0 & 0 & 0 & 1 & 0 & 0 \\
		\lambda^2 v & 0 & \frac{1}{4} + \lambda^2 m & 0 & -i\lambda \beta & i\lambda \alpha \\[2pt] \hdashline
		-\lambda \alpha & 0 & i\lambda \beta & 0 & 0 & 0 \\ \hdashline
		-\lambda \beta & 0 & -i\lambda \alpha & 0 & 0 & 0
	\end{array}
	\right),
	\notag \\[3pt]
	W &= \left(
	\begin{array}{cc:cc:c:c}
		c & a & h & f & \xi & \rho \\
		b & -c & g & -h & -\eta & -\sigma \\ \hdashline
		h & f & c & a & i\rho & -i\xi \\
		g & -h & b & -c & -i\sigma & i\eta \\ \hdashline
		-\eta & -\xi & i\sigma & i\rho & 0 & 0 \\ \hdashline
		-\sigma & -\rho & -i\eta & -i\xi & 0 & 0
	\end{array}
	\right).
\end{align}
These matrices have a block structure, with each block having a definite degree:
\begin{eqnarray}
	Z =
	\begin{pmatrix}
		A^{[00]} & A^{[11]} & A^{[10]} & A^{[01]} 
		\\
		B^{[11]} & B^{[00]} & B^{[01]} & B^{[10]} 
		\\
		C^{[10]} & C^{[01]} & C^{[00]} & C^{[11]} 
		\\
		D^{[01]} & D^{[10]} & D^{[11]} & D^{[00]} 
	\end{pmatrix}.
	\label{GradedMatrix}
\end{eqnarray}
We define the $\Z2$-graded trace of this matrix by \cite{rw2}
\begin{eqnarray}
	\Z2\text{-}\mathrm{tr}\, Z = \mathrm{tr}\, A^{[00]} + \mathrm{tr}\, B^{[00]} - \mathrm{tr}\, C^{[00]} - \mathrm{tr}\, D^{[00]}
\end{eqnarray}
where $\mathrm{tr}$ denotes the ordinary trace of a matrix. 

Now we define the Hamiltonian functional $ H(\lambda)$ of degree [00] as the functional whose left functional derivatives are given by
\begin{align}
	\frac{\delta^L H(\lambda)}{\delta m} &=\frac{1}{2} \Z2\text{-}\mathrm{tr} \left( W\frac{\partial L_x}{\partial m}\right) =  a \lambda^2,
	\notag\\
	\frac{\delta^L H(\lambda)}{\delta v} &=\frac{1}{2} \Z2\text{-}\mathrm{tr} \left( W\frac{\partial L_x}{\partial v}\right) = f \lambda^2,
	\notag\\
	\frac{\delta^L H(\lambda)}{\delta \alpha} &=-\frac{1}{2} \Z2\text{-}\mathrm{tr} \left( W \frac{\partial L_x}{\partial \alpha}\right)=  2\xi\lambda,
	\notag\\
	\frac{\delta^L H(\lambda)}{\delta \beta} &=-\frac{1}{2} \Z2\text{-}\mathrm{tr} \left( W\frac{\partial L_x}{\partial \beta}\right) = 2\rho\lambda.
\end{align}
Expanding $H(\lambda)$ as a Laurent series in $\lambda$, we write
\begin{align}
	H(\lambda) = \sum_{n \in \mathbb{Z}} H_{-2n}\lambda^{-2n}.
\end{align}
It follows that
\begin{alignat}{2}
	\frac{\delta^L H_{-2n}}{\delta m} &= a_{-2n-2},
	& \qquad \quad
	\frac{\delta^L H_{-2n}}{\delta v} &= f_{-2n-2},
	\notag\\
	\frac{\delta^L H_{-2n}}{\delta \alpha} &=  2\xi_{-2n-1},
	&
	\frac{\delta^L H_{-2n}}{\delta \beta} &= 2\rho_{-2n-1}. \label{HcompDef}
\end{alignat}
Combining these with  \eqref{recurrence}, the following proposition is immediate:
\begin{prop} \label{Prop:Rec1}
	The left functional derivatives of the Hamiltonian functionals satisfy the  recurrence relation:
	\begin{align}
		P_1 \begin{pmatrix}
			\frac{\delta^L H_{-2n-2} }{\delta m} \\[3pt]
			\frac{\delta^L H_{-2n-2}}{\delta v}\\[3pt]
			\frac{\delta^L H_{-2n-2}}{\delta \alpha}\\[3pt]
			\frac{\delta^L H_{-2n-2}}{\delta \beta}
		\end{pmatrix}
		=
		P_2 \begin{pmatrix}
			\frac{\delta^L H_{-2n}}{\delta m}\\[3pt]
			\frac{\delta^L H_{-2n}}{\delta v}\\[3pt]
			\frac{\delta^L H_{-2n}}{\delta \alpha}\\[3pt]
			\frac{\delta^L H_{-2n}}{\delta \beta}
		\end{pmatrix}.
		\label{RecRelP1P2}
	\end{align}
\end{prop}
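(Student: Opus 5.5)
The plan is to read Proposition \ref{Prop:Rec1} as a direct transcription of the recurrence relation \eqref{recurrence}, using the identification \eqref{HcompDef} between the coefficients of $W$ and the left functional derivatives of the $H_{-2n}$. We do not need to compute any functional derivative. We only substitute the already-established identities.

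First, take \eqref{HcompDef} with $n$ replaced by $n+1$. This gives
\begin{equation*}
\Bigl(\tfrac{\delta^L H_{-2n-2}}{\delta m},\tfrac{\delta^L H_{-2n-2}}{\delta v},\tfrac{\delta^L H_{-2n-2}}{\delta \alpha},\tfrac{\delta^L H_{-2n-2}}{\delta \beta}\Bigr)^T=(a_{-2n-4},f_{-2n-4},2\xi_{-2n-3},2\rho_{-2n-3})^T .
\end{equation*}
This is exactly the vector acted on by $P_1$ in \eqref{recurrence}. Likewise, \eqref{HcompDef} at level $n$ identifies the vector acted on by $P_2$ with $(a_{-2n-2},f_{-2n-2},2\xi_{-2n-1},2\rho_{-2n-1})^T$.

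Second, recall how \eqref{recurrence} was obtained. The relation \eqref{rec_rel} holds for every $n$, since it is just a rewriting of \eqref{SZCrecrel}, which follows from the stationary zero-curvature equation \eqref{SZCE}. We apply \eqref{mat_S}, which comes from the first two lines of \eqref{SZCrecrel}, at two levels. At level $n+1$ it expresses the left-hand vector of \eqref{rec_rel} as $S$ applied to $(a_{-2n-4},f_{-2n-4},2\xi_{-2n-3},2\rho_{-2n-3})^T$. At level $n$ it expresses the right-hand vector as $S$ applied to $(a_{-2n-2},f_{-2n-2},2\xi_{-2n-1},2\rho_{-2n-1})^T$. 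Since $P_1=J_1S$ and $P_2=J_2S$, this yields \eqref{recurrence}. Inserting the identification from the first step turns \eqref{recurrence} into \eqref{RecRelP1P2}.

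The only point requiring care is the consistency of the identification \eqref{HcompDef} itself, and this is the step I expect to be the main obstacle. One must check that the $\Z2$-graded supertraces really produce $a\lambda^2$, $f\lambda^2$, $2\xi\lambda$, $2\rho\lambda$. One must also check that the Laurent coefficients match after the index shift $\lambda^{2}\lambda^{-2n-2}=\lambda^{-2n}$, and the analogous shift for the odd terms. To do this, I would read off $\partial L_x/\partial m$ and the other three derivatives from the six-dimensional matrix. I would then multiply by $W$ and take the graded trace with the sign convention for the $C,D$ blocks. The signs here explain the minus signs in front of the $\alpha,\beta$ traces. Since \eqref{HcompDef} is taken as the definition of $H(\lambda)$, the proposition itself then follows immediately.
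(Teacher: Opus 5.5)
Your proposal is correct and is essentially the paper's argument: the paper likewise calls the proposition immediate, obtained by inserting \eqref{HcompDef} at levels $n$ and $n+1$ into \eqref{recurrence}, which itself comes from \eqref{rec_rel} and \eqref{mat_S} via $P_1=J_1S$, $P_2=J_2S$ exactly as you recount. The supertrace check you flag is not needed, since the paper takes \eqref{HcompDef} as the definition of $H(\lambda)$; it does go through anyway, with the diagonal blocks giving $2a\lambda^2$ and $2f\lambda^2$ before the factor $\tfrac12$, and the $C,D$ block signs giving $-4\xi\lambda$ and $-4\rho\lambda$ before the factor $-\tfrac12$.
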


It is also immediate from \eqref{EOM_P} that the $\Z2$-CH equation admits two distinct Hamiltonians:
\begin{align}
	&\begin{pmatrix}
		\dot{m}\\ \dot{v} \\ \dot{\alpha}\\ \dot{\beta}
	\end{pmatrix}
	= P_1
	\begin{pmatrix}
		\frac{\delta^L H_{-2}}{\delta m} \\[3pt] 
		\frac{\delta^L H_{-2}}{\delta v} \\[3pt] 
		\frac{\delta^L H_{-2}}{\delta \alpha} \\[3pt]
		\frac{\delta^L H_{-2}}{\delta \beta}
	\end{pmatrix}
	= P_2
	\begin{pmatrix}
		\frac{\delta^L H_0}{\delta m} \\[3pt]
		\frac{\delta^L H_0}{\delta v} \\[3pt]
		 \frac{\delta^L H_0}{\delta \alpha} \\[3pt]
		  \frac{\delta^L H_0}{\delta \beta}
	\end{pmatrix},
	\label{EoMbyP1P2}
\end{align}
where
\begin{align}
	H_{-2} &=-\frac12\int dx~ \left(m +\psi'\alpha +\chi'\beta\right),
	\notag\\
	H_{0} &=\frac{1}{2}\int dx \ \big[u(u-u'')+w(w-w'') \big].
	\label{Hamiltonians}
\end{align}

\subsection{Graded analogue of Hamiltonian functionals}

We seek Hamiltonian functionals that provide a nontrivially graded analogue of those defined in \eqref{HcompDef}. 
This can be done by looking for a $ 4 \times 4$ matrix $ N$ of the structure \eqref{GradedMatrix} satisfying
\begin{equation}
	[N, P_i] = 0, \quad i = 1, 2
\end{equation}
It is not difficult to verify that there is a unique such matrix, up to a scalar multiple, with nonvanishing entries only in the [11]-graded positions. It is given by
\begin{equation}
	N=
	\left(
	\begin{array}{cc:cc}
		& 1& & \\
		1& & & \\\hdashline
		& & & -i\\
		& & i& \\
	\end{array}
	\right),
	\qquad N^2 = \mathbb{I}_4,
\end{equation}  
where $ \mathbb{I}_4 $ denotes the $4 \times 4$ identity matrix. 

With this matrix $N$, we define the [11]-graded analogue $\tilde{H}_{-2n}$ of $ H_{-2n}$ by requiring its left functional derivatives to satisfy
\begin{equation}
	\frac{\delta^L \tilde{H}_{-2n}}{\delta m_i} = \sum_{j=1}^4 N_{ij}\frac{\delta^L H_{-2n}}{\delta m_j}. 
	\label{HtildDef}
\end{equation}
This relation implies that the left functional derivatives of both $\tilde{H}_{-2n}$ and $H_{-2n}$ are given by the solutions of the stationary zero-curvature equation:
\begin{align}
	\left(
	\begin{array}{c}
		a_{-2n-2} \\ f_{-2n-2} \\ 2\xi_{-2n-1} \\ 2\rho_{-2n-1}
	\end{array}
	\right)
	=
	\left(
	\begin{array}{c}
		\frac{\delta^L H_{-2n}}{\delta m}\\[4pt]
		\frac{\delta^L H_{-2n}}{\delta v}\\[4pt]
		\frac{\delta^L H_{-2n}}{\delta \alpha}\\[4pt]
		\frac{\delta^L H_{-2n}}{\delta \beta}
	\end{array}
	\right) 
	=
	\left(
	\begin{array}{c}
		\frac{\delta^L \tilde H_{-2n}}{\delta v}\\[4pt]
		\frac{\delta^L \tilde H_{-2n}}{\delta m}\\[4pt]
		-i\frac{\delta^L \tilde H_{-2n}}{\delta \beta}\\[4pt]
		i\frac{\delta^L \tilde H_{-2n}}{\delta \alpha}
	\end{array}
	\right).\label{relHH}
\end{align}
Using \eqref{HtildDef} and $ N^2 = \mathbb{I}_4$, we obtain a recurrence relation for $\tilde{H}_{-2n}$ from the recurrence relation for $H_{-2n}$ given in Proposition \ref{Prop:Rec1}.
\begin{prop}
	The left functional derivatives of $\tilde{H}_{-2n}$ satisfy the recurrence relation:
	\begin{align}
		P_1
		\left(
		\begin{array}{c}
			\frac{\delta^L \tilde H_{-2n-2}}{\delta m}\\[4pt]
			\frac{\delta^L \tilde H_{-2n-2}}{\delta v}\\[4pt]
			\frac{\delta^L \tilde H_{-2n-2}}{\delta \alpha}\\[4pt]
			\frac{\delta^L \tilde H_{-2n-2}}{\delta \beta}
		\end{array}
		\right) 
		=
		P_2
		\left(
		\begin{array}{c}
			\frac{\delta^L \tilde H_{-2n}}{\delta m}\\[4pt]
			\frac{\delta^L \tilde H_{-2n}}{\delta v}\\[4pt]
			\frac{\delta^L \tilde H_{-2n}}{\delta \alpha}\\[4pt]
			\frac{\delta^L \tilde H_{-2n}}{\delta \beta}
		\end{array}
		\right).
		\label{11rec3}
	\end{align}
\end{prop}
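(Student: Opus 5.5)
The plan is to derive \eqref{11rec3} directly from Proposition \ref{Prop:Rec1}, using the defining relation \eqref{HtildDef} together with the two algebraic facts about $N$: $N^2=\mathbb{I}_4$ and $[N,P_i]=0$ for $i=1,2$. Writing $\nabla H_{-2n}$ for the column vector of left functional derivatives $(\delta^L H_{-2n}/\delta m_j)_{j=1}^4$, \eqref{HtildDef} says $\nabla\tilde H_{-2n}=N\nabla H_{-2n}$. Since $N^2=\mathbb{I}_4$, this is equivalent to $\nabla H_{-2n}=N\nabla\tilde H_{-2n}$, which is exactly the content of \eqref{relHH}.

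Next I would start from \eqref{RecRelP1P2}, $P_1\nabla H_{-2n-2}=P_2\nabla H_{-2n}$, and apply $N$ on the left to both sides. Because $N$ has constant (scalar) entries, it acts on the left of the matrix differential operators without generating derivative terms. Then $[N,P_i]=0$ lets us move $N$ through $P_i$:
\begin{equation*}
NP_1\nabla H_{-2n-2}=P_1N\nabla H_{-2n-2}=P_1\nabla\tilde H_{-2n-2},\qquad NP_2\nabla H_{-2n}=P_2\nabla\tilde H_{-2n},
\end{equation*}
and equating the two sides gives \eqref{11rec3}.

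The main point needing care is the commutation $[N,P_i]=0$. It is asserted in the text, but I would verify it, since the entries of $N$ in the $(\alpha,\beta)$ block are $\pm i$ and the entries of $P_i$ are $\Z2$-graded operators. For $P_2$ it is immediate from its block-diagonal form: the $(m,v)$ block $\begin{pmatrix}M&V\\V&M\end{pmatrix}$ commutes with $\begin{pmatrix}0&1\\1&0\end{pmatrix}$, and the $(\alpha,\beta)$ block is a multiple of $L\,\mathbb{I}_2$. For $P_1$ I would check the block structure \eqref{P1P2def} entry by entry. The swapped patterns $(p_1,p_2;p_2,p_1)$, $(p_3,p_4;ip_4,-ip_3)$, $(p_5,p_6;ip_6,ip_5)$ and $(p_7,p_8;-p_8,p_7)$ are precisely what is required for $NP_1=P_1N$. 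For example, the $(1,3)$ entries give $(NP_1)_{13}=ip_4$ and $(P_1N)_{13}=p_4\cdot i$, which agree.

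One possible subtlety is whether a sign $(-1)^{\hat N\cdot\hat{\mathcal O}}$ could arise when the scalar matrix $N$ passes a graded operator. It does not: $N$ is a numerical matrix, and its "degree" only labels positions, so matrix multiplication is ordinary. The only substantive input is therefore the block-pattern check above, after which the proposition follows in one line from Proposition \ref{Prop:Rec1}.
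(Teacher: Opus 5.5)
Your proposal is correct and follows the paper's own argument: apply the constant matrix $N$ to the recurrence of Proposition~\ref{Prop:Rec1}, commute it through $P_1$ and $P_2$, and use \eqref{HtildDef}. Your entrywise check of $[N,P_1]=0$ against the block pattern in \eqref{P1P2def} is right. Note also that, in the direction you run the argument, $N^2=\mathbb{I}_4$ is not actually needed; only $[N,P_i]=0$ and linearity are used.
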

From the solution \eqref{EOM_P} of the stationary zero-curvature equation and \eqref{relHH}, we obtain the following explicit forms:
\begin{align}
	\tilde{H}_{0} =\frac{1}{2} \int dx \ [u(w-w'')+w(u-u'')],
	\notag\\
	\tilde{H}_{-2} = -\frac{1}{2} \int dx \ (v-i \ \psi' \beta + i \ \chi' \alpha).
\end{align}

\subsection{Poisson brackets and conservation laws}

We define a Poisson bracket on the space of $\Z2$-graded functionals. 
We consider equivalence classes of functionals modulo total $x$-derivatives:
\begin{eqnarray}
	\int dx\, a(x,\mathsf{m},\mathsf{m}',\dots) = \int dx\, (a + \partial_x h),
\end{eqnarray}
where the $\Z2$-graded functions $ a $ and $h$ are of the same degree. 
In the present case of the $\Z2$-CH equation, with a $4 \times 4$ matrix operator $\mathcal{J} = (\mathcal{J}_{ij})$ having the block structure \eqref{GradedMatrix}, the Poisson bracket of functionals $A$ and $B$ is defined by
\begin{align}
	\{A, B\}_{\mathcal{J}}&=\int dx \sum_{i,j}\frac{\delta^R A}{\delta m_i} \mathcal{J}_{ij} \frac{\delta^L B}{\delta m_j}  
	\notag \\ 
	&\overset{(\ref{RLfunctionalder})}{=}\int dx \sum_{i,j} (-1)^{(\hat A +\hat i)\cdot \hat i}~\frac{\delta^L A}{\delta m_i} \mathcal{J}_{ij} \frac{\delta^L B}{\delta m_j}. \label{DEFPB}
\end{align}

The $\Z2$-graded Poisson bracket is required to satisfy the following relations:
\begin{align}
	\{A, B\} &= -(-1)^{\hat A \cdot \hat B} \{B, A\},
	\label{SkewSym} \\
	(-1)^{\hat A \cdot \hat C}\{ \{A,B\}, C \} &+ (-1)^{\hat B \cdot \hat A}\{ \{B,C\}, A \} + (-1)^{\hat C \cdot \hat B}\{ \{C,A\},B \} = 0,
	\label{Jacobi}
\end{align}
where we assume that the Poisson bracket is a mapping of degree [00], which implies that $ \deg \mathcal{J}_{ij} = \hat i + \hat j.$  
The relations \eqref{SkewSym} and \eqref{Jacobi} are called the $\Z2$-skew-symmetry and the  $\Z2$-Jacobi relation, respectively. 
The requirements \eqref{SkewSym} and \eqref{Jacobi} impose restrictions on the matrix operator $ \mathcal{J}.$ 

\begin{thm} \label{Thm:Hpair}
	The pair $(P_1,P_2)$ given in \eqref{P1P2def} is a Hamiltonian pair. Namely, the Poisson brackets associated with $\mathcal{J}=P_1$, $\mathcal{J}=P_2$, and $\mathcal{J}=P_1+P_2$ satisfy the requirements \eqref{SkewSym} and \eqref{Jacobi}.
\end{thm}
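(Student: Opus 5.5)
We will follow the $\Z2$-graded analogue of Olver's functional multivector calculus, as extended to the super case by Mathieu. The argument splits into two parts: a criterion for $\Z2$-skew-symmetry stated through adjoints, and a Jacobi check reduced to the vanishing of a Schouten-type bracket.

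\textbf{Skew-symmetry.} We first translate \eqref{SkewSym} into a condition on $\mathcal{J}$. Write $\{B,A\}_{\mathcal{J}}$ using \eqref{DEFPB}. Move the operator $\mathcal{J}_{ij}$ onto the other factor with the adjoint \eqref{DEF:ajoint}, and reorder the graded-commutative factors using $\deg\mathcal{J}_{ij}=\hat i+\hat j$. The sign bookkeeping should give the following criterion:
\begin{equation}
\mathcal{J}_{ji}^{\dagger} = -(-1)^{\hat i\cdot\hat j}\,\mathcal{J}_{ij},
\end{equation}
that is, $\mathcal{J}$ is $\Z2$-graded skew-adjoint. We then check this entrywise for $P_2$ and $P_1$.

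For $P_2$ it is immediate, since $M^\dagger=-M$, $V^\dagger=-V$, $L^\dagger=L$. The sign $(-1)^{\hat i\cdot\hat j}$ equals $-1$ exactly on the $[10]$ and $[01]$ diagonal blocks, and these carry $\tfrac12 L$.

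For $P_1$ we use $(K^{-1})^\dagger=-K^{-1}$ and the product rule \eqref{daggero}. Each pair of entries then matches up:
\begin{itemize}
\item $p_1$ and $p_2$ are skew-adjoint;
\item $p_3^\dagger$ is related to $p_5$, and $p_4^\dagger$ to $p_6$, with the signs coming from the degrees of $\alpha$ and $\beta$;
\item $p_7$ and $p_8$ are handled through the $-\tfrac12 m$ and $-\tfrac i2 v$ terms together with $A^\dagger K^{-1}A$.
\end{itemize}
Skew-symmetry for $P_1+P_2$ then follows by linearity.

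\textbf{Jacobi.} For \eqref{Jacobi} we introduce graded basic uni-vectors $\theta_i$ of degree $\hat i$. These are odd with respect to an additional $\mathbb{Z}_2$ grading, so $\Z2$-skew-symmetric bivectors can be written as
\begin{equation}
\Theta_{\mathcal{J}}=\tfrac12\int \sum_{i,j}\theta_i\wedge\mathcal{J}_{ij}\theta_j .
\end{equation}
We then prove the graded version of Olver's criterion: the Jacobi identity holds if and only if
\begin{equation}
\mathrm{pr}\,\mathbf{v}_{\mathcal{J}\theta}(\Theta_{\mathcal{J}})=0,
\end{equation}
where $\mathrm{pr}\,\mathbf{v}_{\mathcal{J}\theta}$ is the prolonged evolutionary vector field with characteristic $\mathcal{J}\theta$. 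This requires checking that the sign rules for left derivatives, and Proposition \ref{Prop:Euler}, carry over when the $\theta$'s are adjoined.

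\textbf{The operator $P_2$.} Here the structure is simple. The diagonal blocks $\tfrac12 L$ are constant. The $2\times2$ block in $M,V$ depends only on $m,v$, and under the change of variables $m\pm v$ it decouples into two copies of the Lie–Poisson operator $-(m\partial_x+\partial_x m)$. To make this decoupling rigorous we treat $v$, which has degree $[11]$ and commutes with $m$, as formally even. The trivector for this block is then the standard one, and it vanishes.

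\textbf{The operator $P_1$.} The nonlocal operator $P_1=J_1S$ is where the direct computation would be hard. Instead we exploit the factorization behind \eqref{mat_S}. Inverting the first two rows suggests that
\begin{equation}
P_1=J_1 S \sim J_1 J_2^{-1} J_1^{\dagger},
\end{equation}
up to the sign conventions. We would therefore aim to show that $P_1$ is the image of a local, compatible structure under a Miura-type map. The target is the operator obtained from the $\osp$-type Kirillov–Kostant bracket with the central extension $K$, possibly after an inverse (Dirac) reduction.

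What we would try first is to identify $P_1^{-1}$, which is expected to be local up to $K$. We would then verify that $P_1^{-1}$ defines a closed graded two-form, since closedness is a local condition and easier to check than a trivector computation.

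\textbf{Compatibility.} To show that $P_1+P_2$ is Hamiltonian we compute
\begin{equation}
\mathrm{pr}\,\mathbf{v}_{P_1\theta}(\Theta_{P_2})+\mathrm{pr}\,\mathbf{v}_{P_2\theta}(\Theta_{P_1})=0 .
\end{equation}
As a fallback we would use the recurrence of Proposition \ref{Prop:Rec1} together with a graded Magri–Fuchssteiner lemma, pairing the operators with $K$ and with $L$.

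\textbf{Main obstacle.} We expect the main difficulty to be the Jacobi identity for $P_1$. This combines the nonlocal $K^{-1}$ with sign bookkeeping for four gradings, and every sign convention established in the earlier steps enters at this point.
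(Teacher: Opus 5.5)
Your skew-symmetry argument and your graded version of Olver's criterion (functional bivector, prolonged evolutionary vector field, vanishing trivector for $P_1$, $P_2$ and $P_1+P_2$) match the framework of the paper's appendix. Your treatment of $P_2$ is a valid shortcut for what the paper calls a direct computation. Every sign $(-1)^{\hat i\cdot\hat j}$ with $i,j\in\{1,2\}$ is trivial because $[11]\cdot[11]=0$, and the constant $\frac12 L$ blocks are annihilated by the prolongation, so the check reduces to the bosonic one. One caveat: $m\pm v$ is inhomogeneous, so that change of variables only makes sense after you have shown that the grading can be forgotten on this block.

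The gap is the Jacobi identity for $P_1$. This is the real content of the theorem, and it also enters the compatibility condition, which you only state.

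\begin{itemize}
\item \emph{The factorization is false.} Since $P_2:=J_2S$, you have $S=J_2^{-1}P_2$ and hence $P_1=J_1J_2^{-1}P_2$ exactly. But $J_1^\dagger$ is not $P_2$ up to signs. Already at $v=\alpha=\beta=0$, the $(3,3)$ entry of $J_1J_2^{-1}J_1^\dagger$ is $\frac12\,mL^{-1}m$, whereas that of $P_1$ is $-\frac12 m$. This is not a sign discrepancy: the only nonlocality in $P_1$ is $K^{-1}$.
\item \emph{The symplectic route misjudges locality.} At $v=\alpha=\beta=0$ the $(1,1)$ entry of $P_1$ is $2MK^{-1}M$, so the corresponding entry of $P_1^{-1}$ is $\frac12 M^{-1}KM^{-1}$ with $M^{-1}=-\frac12 m^{-1/2}\partial_x^{-1}m^{-1/2}$. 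Checking closedness of that two-form is not a local computation.
\item \emph{The fallback does not work either.} Proposition~\ref{Prop:Rec1} only says that one sequence of gradients obeys the Lenard recursion. That is a consequence of compatibility, not a criterion for it. A Fuchssteiner-type argument would need the hereditary property of $P_2P_1^{-1}$, which is no easier.
\end{itemize}

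So nothing in your proposal establishes $\Theta_{P_1}\overleftarrow{\mathbf{V}}^{\mathrm{pr}}_{P_1dm}=0$ or the compatibility condition. The paper obtains both by direct computation.

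If you want a structural argument instead, the consistent form of your idea is $P_1=P_2\,\mathcal{K}^{-1}P_2$ with $\mathcal{K}:=J_2J_1^{-1}P_2$. The nonlocal inverse of the $(m,v)$-block of $J_1$ cancels against $P_2$. As a result, $\mathcal{K}$ is local up to multiplication by the inverse of $\begin{pmatrix} m & iv\\ -iv & m\end{pmatrix}$, and its $(m,v)$-block is $\frac12 K$ when $\alpha=\beta=0$. You would still need to prove two things, neither of which is in your proposal:
\begin{itemize}
\item that $(\mathcal{K},P_2)$ is a graded Hamiltonian pair, which requires $m^2-v^2$ to be invertible;
\item a graded Magri theorem showing that $P_2\mathcal{K}^{-1}P_2$ is Hamiltonian and compatible with $P_2$.
\end{itemize}
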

\begin{proof}
	The proof is given in Appendix.
\end{proof}

In what follows, we consider only the Poisson brackets associated with $P_1$ and $P_2$. 
Time evolution of a functional of the momentum densities $ A[m_i] = \int dx \, a$ is given by the Hamiltonians \eqref{Hamiltonians}:
\begin{eqnarray}
	\frac{dA}{dt} = \{ A, H_{-2}\}_{P_1} = \{ A, H_{0}\}_{P_2}. \label{EvEqPoi}
\end{eqnarray}
\begin{proof}
	For the time evolution  determined by the pair $ (P_1, H_{-2})$,  
	we have
	\begin{align*}
		\frac{dA}{dt} &= \int dx \sum_{i}\dot{m}_i \frac{\delta^L A}{\delta m_i}
		\overset{\eqref{EoMbyP1P2}}{=} 
		\int dx \sum_{i,j}(P_1)_{ij} \frac{\delta^L H_{-2}}{\delta m_j} \frac{\delta^L A}{\delta m_i} 
		\\
		&= \int dx \sum_{i,j}(-1)^{\hat{i}\cdot(\hat{i}+\hat{A})} \frac{\delta^L A}{\delta m_i} (P_1)_{ij} \frac{\delta^L H_{-2}}{\delta m_j}
		\overset{\eqref{DEFPB}}{=} \{ A, H_{-2} \}_{P_1}.
	\end{align*}
	One can repeat the same computation for the pair $(P_2, H_0).$
\end{proof}

With the relation \eqref{EvEqPoi}, we have established the bi-Hamiltonian structure of the $\Z2$-CH equation. 
Consequently, the $\Z2$-CH equation possesses infinitely many involutive conserved charges. 

\begin{thm} \label{Thm:Conservation}
	The Hamiltonian functionals $H_{-2n}$ defined in \eqref{HcompDef} and $\tilde{H}_{-2n}$ defined in  \eqref{HtildDef} satisfy the following properties:
	\begin{enumerate}
		\item They are conserved:
		    \begin{equation}
		    	\frac{d H_{-2n}}{dt} = \frac{d \tilde{H}_{-2n}}{dt} = 0,
		    \end{equation}
		\item They are in involution:
		
		    \begin{align}
		    	 \{ H_{-2n}, H_{-2m}\}_{P_k}= \{\tilde H_{-2n}, \tilde H_{-2m}\}_{P_k}= 
		    	 \{ H_{-2n}, \tilde H_{-2m}\}_{P_k}= 0, \quad k = 1, 2
		    \end{align}
	\end{enumerate}
\end{thm}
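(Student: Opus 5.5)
I would follow the standard Magri/Olver argument, adapted to the $\Z2$-graded bracket \eqref{DEFPB}. The plan is to prove involutivity first and then obtain conservation as a special case.

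\textbf{Step 1: shifting the index between $P_1$ and $P_2$.} For homogeneous $A,B$ whose left functional derivatives satisfy the recurrence \eqref{RecRelP1P2} (or \eqref{11rec3}), I would show
\begin{equation*}
\{A_{-2n}, B_{-2m}\}_{P_1} = \{A_{-2n}, B_{-2m+2}\}_{P_2}.
\end{equation*}
Here $A_{-2n}$ and $B_{-2m}$ each stand for either $H$ or $\tilde H$ at the indicated level. The identity follows by writing the $P_1$-bracket via \eqref{DEFPB} and applying \eqref{RecRelP1P2} to the right-hand factor $P_1\,\delta^L B_{-2m}$. The recurrence maps $B_{-2m}$ to level $-2m+2$, which is consistent with $P_1\delta H_{-2n-2}=P_2\delta H_{-2n}$.

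Combining this identity with $\Z2$-skew-symmetry \eqref{SkewSym} (Theorem \ref{Thm:Hpair}), I would move the index from the second slot to the first:
\begin{equation*}
\{A_{-2n},B_{-2m}\}_{P_2}=\{A_{-2n},B_{-2m-2}\}_{P_1} = \pm\{B_{-2m-2},A_{-2n}\}_{P_1} = \pm\{B_{-2m-2},A_{-2n+2}\}_{P_2}=\pm\{A_{-2n+2},B_{-2m-2}\}_{P_2}.
\end{equation*}
Every $H_{-2n}$ has degree $[00]$ and every $\tilde H_{-2n}$ has degree $[11]$. Hence the sign factors $(-1)^{\hat A\cdot\hat B}$ equal $+1$, since $[00]\cdot$anything $=0$, or $(-1)^{[11]\cdot[11]}=(-1)^{2}=1$. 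The brackets are therefore genuinely symmetric under the swap.

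\textbf{Step 2: iterate and use the vanishing diagonal.} Iterating the shift lets one equalize the indices and reduce to $\{A_{-2k},B_{-2k}\}_{P_j}$, or to an adjacent pair. When $A=B$, graded skew-symmetry with trivial sign gives $\{A,A\}=-\{A,A\}=0$.

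The mixed case $\{H_{-2k},\tilde H_{-2k}\}$ is not diagonal, so I would handle it with $N$. By \eqref{HtildDef} and $[N,P_i]=0$, and using $N^2=\mathbb{I}_4$, this bracket becomes $\int \delta H\cdot(\text{signs})\,P_i N\,\delta H$. I would then check that $P_iN$ is graded skew-adjoint in the same sense as $P_i$, so the form vanishes. This is the step I expect to be the main obstacle. The sign bookkeeping from \eqref{DEFPB}, the adjoint rule \eqref{DEF:ajoint}, and the $[11]$ entries of $N$ (including the $\pm i$) must combine exactly. A fallback is to show directly that $\{A,B\}_{P_i}$ with $\delta\tilde B=N\delta B$ equals $\{\tilde A,B\}_{P_i}$, which turns the mixed bracket into a symmetric one. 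If the indices are only adjacent after iteration, I would shift once more and use the same argument.

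\textbf{Step 3: conservation.} By \eqref{EvEqPoi}, $dH_{-2n}/dt=\{H_{-2n},H_{-2}\}_{P_1}$ and $d\tilde H_{-2n}/dt=\{\tilde H_{-2n},H_{-2}\}_{P_1}$. Both vanish by the involutivity established in Steps 1 and 2.
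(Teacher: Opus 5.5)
Your outline matches the paper's proof except at one step. Your Step 1 is the paper's shift lemma. The reduction to equal indices, with $\{H_{-2k},H_{-2k}\}$ and $\{\tilde H_{-2k},\tilde H_{-2k}\}$ killed by graded skew-symmetry, is also the paper's. So is reading off conservation from $\{\,\cdot\,,H_{-2}\}_{P_1}$. One slip: with trivial sign factor the brackets are \emph{antisymmetric} under the swap, not symmetric. That is what Step 2 actually uses, and the two swaps in your chain contribute $(-1)^2=1$.

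The step that differs is the mixed bracket $\{H_{-2k},\tilde H_{-2k}\}_{P}$ with $P\in\{P_1,P_2\}$. The paper writes out $P_2$ and $N$, integrates by parts term by term to show the bracket equals minus itself, and says $P_1$ is ``similar''. For $P_1$ that is a heavier computation, since its entries $p_1,\dots,p_8$ are nonlocal.

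Your first option does not work as stated. $PN$ has degree $[11]$, and $P_1N$ is not skew-adjoint in the sense $(\mathcal J^\dagger)_{ij}=-(-1)^{\hat i\cdot\hat j}\mathcal J_{ij}$. For example, $(P_1N)_{13}=ip_4$ and $(P_1N)_{31}=p_6$, while the paper's $P_1^\dagger$ gives $p_6^\dagger=ip_4$. So $((P_1N)_{31})^\dagger=+(P_1N)_{13}$. The even--odd blocks carry an extra factor $(-1)^{(\hat i+\hat j)\cdot[11]}$, and the vanishing argument would have to be redone with it.

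Your fallback is correct. It is cleaner than the paper's route, because it handles $P_1$ and $P_2$ at once using only $[N,P]=0$ and skew-symmetry. The sign bookkeeping it needs is:
\begin{itemize}
\item for a $[00]$ functional, $h^R=\mathrm{diag}(1,1,-1,-1)\,h^L$;
\item for a $[11]$ functional, $\tilde h^R=\tilde h^L$, since $(\hat j+[11])\cdot\hat j=0$ for every $\hat j$;
\item $N^{T}\mathrm{diag}(1,1,-1,-1)=N$.
\end{itemize}
Together these give $\sum_i h^R_iN_{ik}=\tilde h^R_k$. With $PN=NP$, for any $[00]$ functionals $H,G$,
\[
\{H,\tilde G\}_P=\int dx\sum_{k,j}\Big(\sum_i h^R_iN_{ik}\Big)P_{kj}\,g^L_j=\{\tilde H,G\}_P .
\]
Taking $G=H$ and using $\{\tilde H,H\}_P=-\{H,\tilde H\}_P$ gives $\{H_{-2k},\tilde H_{-2k}\}_P=0$ for both $P_1$ and $P_2$.
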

To prove Theorem \ref{Thm:Conservation}, we need the following lemma:
\begin{lemma}
	$H_{-2n}$ and $ \tilde{H}_{-2n}$ satisfy the recurrence relations:
	\begin{align}
		\{H_{-2n}, H_{-2m}\}_{P_1} &=\{H_{-2n}, H_{-2m+2}\}_{P_2}=\{H_{-2n-2}, H_{-2m+2}\}_{P_1},
		\label{RR1} \\
		\{\tilde H_{-2n}, \tilde H_{-2m}\}_{P_1} &=\{\tilde H_{-2n}, \tilde H_{-2m+2}\}_{P_2}=\{\tilde H_{-2n-2}, \tilde H_{-2m+2}\}_{P_1},
		\label{RR2} \\
		\{H_{-2n}, \tilde H_{-2m}\}_{P_1} &=\{H_{-2n}, \tilde H_{-2m+2}\}_{P_2}=\{H_{-2n-2}, \tilde H_{-2m+2}\}_{P_1}.
		\label{RR3}
	\end{align}
\end{lemma}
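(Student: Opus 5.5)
We derive both equalities in each chain directly from the definition \eqref{DEFPB} of the Poisson bracket together with the recurrence relations of Proposition \ref{Prop:Rec1} and \eqref{11rec3}, in the standard Magri--Olver manner. Write $\nabla^L A$ for the column vector of left functional derivatives $(\delta^L A/\delta m_i)_i$ and $\epsilon_A$ for the diagonal sign matrix with entries $(-1)^{(\hat A+\hat i)\cdot \hat i}$. Then
\[
\{A,B\}_{\mathcal J}=\int dx\, (\nabla^L A)^T \epsilon_A\, \mathcal J\, \nabla^L B .
\]
For the first equality of \eqref{RR1}, we substitute $P_1\nabla^L H_{-2m}=P_2\nabla^L H_{-2m+2}$, which is \eqref{RecRelP1P2} with $n\to m-1$, directly into the right slot. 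This gives $\{H_{-2n},H_{-2m}\}_{P_1}=\{H_{-2n},H_{-2m+2}\}_{P_2}$ with no sign issues, since the left factor is untouched.

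For the second equality we move the operator onto the left argument. We will use the $\Z2$-skew-symmetry \eqref{SkewSym}, guaranteed for $P_1$ and $P_2$ by Theorem \ref{Thm:Hpair}:
\[
\{H_{-2n},H_{-2m+2}\}_{P_2}=-\{H_{-2m+2},H_{-2n}\}_{P_2}.
\]
All $H$'s have degree $[00]$, so no sign appears here. Next we apply the recurrence in the right slot again, $P_2\nabla^L H_{-2n}=P_1\nabla^L H_{-2n-2}$, which gives $-\{H_{-2m+2},H_{-2n-2}\}_{P_1}$. A final use of skew-symmetry yields $\{H_{-2n-2},H_{-2m+2}\}_{P_1}$.

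The same argument handles \eqref{RR2}, using \eqref{11rec3} in place of \eqref{RecRelP1P2}. Here both functionals have degree $[11]$ and $[11]\cdot[11]=0$, so skew-symmetry again carries the factor $-1$ and the signs cancel in pairs exactly as before. For \eqref{RR3} the degrees are $[00]$ and $[11]$, whose product is $0$, so the same two-step computation works: apply \eqref{11rec3} in the right slot, flip with skew-symmetry, apply Proposition \ref{Prop:Rec1}, and flip back.

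The main point needing care is the sign bookkeeping, and specifically the claim that skew-symmetry holds for the specific brackets in question. That claim is exactly the content of Theorem \ref{Thm:Hpair}, which we may assume. Should one prefer to avoid invoking it, one could instead integrate by parts using the adjoint relations \eqref{DEF:ajoint} and \eqref{daggero}, together with $P_k^\dagger=-\epsilon P_k\epsilon$ in a suitable graded sense, but this is precisely what the skew-symmetry statement packages.

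A second minor check is that the recurrence holds for all $n$ for which the functionals are defined. This follows because \eqref{recurrence} was derived for all $n$ from the stationary zero-curvature equation, and \eqref{relHH} transfers it to the $\tilde H$'s.
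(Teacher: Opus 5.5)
Your proposal is correct and follows the paper's proof essentially step for step. The first equality comes from substituting the recurrence into the right slot. The second comes from flipping with $\mathbb{Z}_2^2$-skew-symmetry, substituting the recurrence again, and flipping back. Your explicit sign check for \eqref{RR2} and \eqref{RR3}, via $[11]\cdot[11]=[00]\cdot[11]=0$, spells out what the paper leaves as ``follow similarly.''
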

\begin{proof}
	We prove only \eqref{RR1} here. The other relations follow similarly. 
	The first equality in \eqref{RR1} is proved as follows:
	\begin{align*}
		\{H_{-2n}, H_{-2m}\}_{P_1} &= \int dx \sum_{i,j} \frac{\delta^R H_{-2n}}{\delta m_i}(P_1)_{ij} \frac{\delta^L H_{-2m}}{\delta m_j}
		\\
		&\overset{\eqref{RecRelP1P2}}{=} \int dx \sum_{i,j} \frac{\delta^R H_{-2n}}{\delta m_i}(P_2)_{ij} \frac{\delta^L H_{-2m+2}}{\delta m_j}
		\\
		&= \{ H_{-2n}, H_{-2m+2} \}_{P_2}.
	\end{align*}
	The second equality in \eqref{RR1} is proved as follows:
	\begin{align*}
		\{ H_{-2m+2}, H_{-2n} \}_{P_2} &= 
		\int dx \sum_{i,j} \frac{\delta^R H_{-2m+2}}{\delta m_i} (P_2)_{ij}\frac{\delta^L H_{-2n}}{\delta m_j}
		\\
		&\overset{\eqref{RecRelP1P2}}{=} \int dx \sum_{i,j}\frac{\delta^R H_{-2m+2}}{\delta m_i}(P_1)_{ij} \frac{\delta^L H_{-2n-2}}{\delta m_j}
		\\
		&= \{ H_{-2m+2}, H_{-2n-2} \}_{P_1}.
	\end{align*}
	Then, using the skew-symmetry of the Poisson bracket, we obtain \eqref{RR1}.
\end{proof}

\noindent
\textit{Proof of Theorem \ref{Thm:Conservation}.} 
Repeated use of \eqref{RR1} shows that, for any integers $n, m$, there exists an integer $k$ such that
\begin{equation}
	\{ H_{-2n}, H_{-2m} \}_{P_2} = 
	\begin{cases}
		\{ H_{-2k}, H_{-2k}  \}_{P_2}, &\quad n-m \equiv 0 \mod 2
		\\[10pt]
		\{ H_{-2k}, H_{-2k}  \}_{P_1}, &\quad n-m \equiv 1 \mod 2
	\end{cases}
\end{equation}
The right-hand side is identically zero due to the $\Z2$-skew-symmetry of the Poisson brackets. 
Therefore, we have $ \{ H_{-2n}, H_{-2m} \}_{P_2} = \{ H_{-2n}, H_{-2m-2} \}_{P_1} = 0 $ for any $n, m$. Similarly, using \eqref{RR2} one can show that $ \{ \tilde{H}_{-2n}, \tilde{H}_{-2m} \}_{P_k} = 0$ for $k=1,2.$ 
The conservation of $H_{-2n}$ then follows immediately. 

Next, by the repeated use of \eqref{RR3} one can see that, for any integers $n, m$, there exists $ k \in \mathbb{Z}$ such that
\begin{equation}
	\{H_{-2n}, \tilde{H}_{-2m} \}_{P_2} = 
	\begin{cases}
		\{H_{-2k}, \tilde{H}_{-2k} \}_{P_1}, & \quad n-m \equiv 1 \mod 2
		\\[10pt]
		\{H_{-2k}, \tilde{H}_{-2k} \}_{P_2}, & \quad n-m \equiv 0 \mod 2
	\end{cases}
\end{equation}
The right-hand side vanishes identically, as we show below. 
For  the sake of simplicity, we write
\begin{equation*}
	h_i^R := \frac{\delta^R H_{-2k}}{\delta m_i},
	\qquad
	h_i^L := \frac{\delta^L H_{-2k}}{\delta m_i},
	\qquad
	\tilde{h}_i^R := \frac{\delta^R \tilde{H}_{-2k}}{\delta m_i},
	\qquad
	\tilde{h}_i^L := \frac{\delta^L \tilde{H}_{-2k}}{\delta m_i}.
\end{equation*}
By the explicit form of $ P_2$ given in \eqref{P1P2def} and $ \displaystyle \tilde{h}_i^L = \sum _j N_{ij} h_j^L, $ we have
\begin{align*}
	\{H_{-2k}, &\tilde{H}_{-2k} \}_{P_2} \\
	&= 
	\int dx \sum_{i,j} h^R_i(P_2)_{ij} \tilde{h}^L_j 
	\\
	&= \int dx \big(  h^R_1 M h^L_2 + h^R_1 V h^L_1 + h^R_2 V h^L_2 + h^R_2 M h^L_1 - \frac{i}{2} h^R_3 L h^L_4 + \frac{i}{2} h^R_4 L h^L_3 \big).
\end{align*}
By integration by parts, the operators $M$, $V$, and $L$ can be transferred to act on $h^R_i$. 
We then express $h^R_i$ in terms of $h^L_i$ using \eqref{RLfunctionalder}, and vice versa.  
Interchanging the factors in each term, we obtain 
\begin{align*}
	\{H_{-2k}, &\tilde{H}_{-2k} \}_{P_2}\\
	&=
	-\int dx \big(  h^R_2 M h^L_1 + h^R_1 V h^L_1 + h^R_2 V h^L_2 + h^R_1 M h^L_2 + \frac{i}{2} h^R_4 L h^L_3 - \frac{i}{2} h^R_3 L h^L_4  \big)
	\\
	&=-\{H_{-2k},\tilde H_{-2k}\}_{P_2},
\end{align*}
which implies that $ \{H_{-2k}, \tilde{H}_{-2k} \}_{P_2} = 0.$ 
Similarly, one can show that $ 	\{H_{-2k}, \tilde{H}_{-2k} \}_{P_1} = 0.$ 
Therefore, we have shown that $ \{H_{-2n}, \tilde{H}_{-2m} \}_{P_2} = \{H_{-2n}, \tilde{H}_{-2m-2} \}_{P_1} =0 $ for any $n, m$. 
The conservation of $\tilde{H}_{-2n}$ then follows immediately. \hfill $\Box$

\section{Concluding remarks} \label{SEC:CR}

By constructing a Lax pair in the loop algebra of $\Z2$-$\osp(1|2)$, we have derived a $\Z2$-graded extension of the CH equation. 
The $\Z2$-CH equation has one dynamical variable in each graded subspace and belongs to a hierarchy determined by a particular choice of $L_x$ given in \eqref{LaxOp}. 
With the aid of the stationary zero-curvature equation, we have established the bi-Hamiltonian structure for the $\Z2$-CH equation. 
Remarkably, there exist infinitely many conserved quantities of degree [11], in addition to the ones of degree [00]. 
All of these conserved quantities are in involution. 
In this paper, we have focused only on Poisson brackets of degree [00]. 
One may also consider Poisson brackets of degree [11], which would allow the time evolution of the system to be expressed in terms of the [11]-graded Hamiltonian functionals.

The present construction of the bi-Hamiltonian structure and conserved charges should also be applicable to the $\Z2$-graded extension of the KdV and mKdV equations introduced in \cite{AiFujiIto}, which were also derived from the loop algebra of $\Z2$-$\osp(1|2)$ in the same manner as in the present work.

Here, we mention two important open problems concerning the $\Z2$-graded extension of the CH, KdV and mKdV equations.
The first is to clarify the conformal nature of these equations in terms of their underlying Virasoro algebraic structure and Hamiltonian mechanics on the Bott-Virasoro group. 
A $\Z2$-graded extension of the Virasoro algebra based on $\Z2$-$\osp(1|2)$ was introduced in \cite{AFITTsuper,AiSe}. However, a dynamical realization of this $\Z2$-graded Virasoro algebra has not yet been found. 
The second is to develop a classical $r$-matrix formulation of these equations and clarify its relation to their bi-Hamiltonian and integrable structures. 
To the best of our knowledge, a $\Z2$-graded extension of the classical Yang-Baxter equation has not yet been discussed in the literature.

\section*{Acknowledgements}

N. A. is supported by JSPS KAKENHI Grant Number JP23K03217. 
R. I. is supported by JST SPRING, Grant Number JPMJSP2139. 
L. V. was financed in part by the Coordena\c{c}\~{a}o de Aperfei\c{c}oamento de Pessoal de N\'{i}vel Superior - Brasil (CAPES) - Finance Code 001.

\section*{Appendix : Proof of Theorem \ref{Thm:Hpair}}

\setcounter{section}{1}
\renewcommand{\thedfn}{\Alph{section}\arabic{dfn}}
\renewcommand{\thethm}{\Alph{section}\arabic{thm}}
\renewcommand{\theprop}{\Alph{section}\arabic{prop}}
\renewcommand{\thelemma}{\Alph{section}\arabic{lemma}}

\renewcommand{\thesection}{\Alph{section}}
\setcounter{equation}{0}

\setcounter{dfn}{0}
\setcounter{thm}{0}
\setcounter{prop}{0}
\setcounter{lemma}{0}

Theorem \ref{Thm:Hpair} can be proved in essentially the same way as in Chapter 7 of \cite{Olver1993applications}.
The only difference lies in how the $\Z2$-grading is incorporated into the proof. 
Therefore, rather than repeating the proof in detail, we indicate how the propositions and theorems in Chapter 7 of \cite{Olver1993applications} are adapted to the $\Z2$-graded setting.

Let $m_1, m_2, \dots, m_q$ be $\Z2$-graded functions, which are not necessarily the momentum densities of the $\Z2$-CH equation. We consider a more general setting here. 
We denote $\hat{i} := \deg m_i.$ 
The Poisson bracket of degree [00] is defined by the relation \eqref{DEFPB}.

First, to prove the $\Z2$-skew-symmetry, we extend the notion of the adjoint of the matrix operator $\mathcal{J}$ in \eqref{DEFPB} to $\Z2$-graded setting. 
Let $(f_1, f_2, \dots, f_q)$ and $  (g_1, g_2, \dots, g_q)$ be vectors of degree $\hat{f}$ and $ \hat{g}$, respectively. 
By this, we mean that their components are $\Z2$-graded functions of degrees $ \deg f_i = \hat{f} + \hat{i}$ and $ \deg g_i = \hat{g} + \hat{i}.$
\begin{dfn}
	The adjoint $\mathcal{J}^{\dagger}$ of $\mathcal{J}$ is defined by
	\begin{equation}
		\int dx \sum_{i,j} f_i \mathcal{J}_{ij} g_j = \int dx \sum_{i,j} (-1)^{\hat{f_i}\cdot(\hat i+\hat j)}  \Big((\mathcal{J}^\dag)_{ji}f_i \Big)  g_j \label{adjointj}. 
	\end{equation}
	The matrix operator $\mathcal{J}$ is is said to be skew-adjoint if $(\mathcal{J}^\dag)_{ij}=-(-1)^{\hat i \cdot \hat j} \mathcal{J}_{ij}$. 
\end{dfn}

\noindent 
\textbf{Remark.} The entries of $\mathcal{J}^{\dagger}$ are given by
\begin{equation}
	(\mathcal{J}^\dagger) _{ij}=(\mathcal{J}_{ji})^\dagger. \label{adjointelements} 
\end{equation} 
This follows by applying \eqref{DEF:ajoint} to the left-hand side of \eqref{adjointj}. 
With this, it is easy to verify the following:
\begin{prop}
	If the matrix operator $\mathcal{J}$ is skew-adjoint, then the corresponding Poisson bracket is $\Z2$-skew-symmetric:
	\[
	\{A,B\}_{\mathcal{J}}=-(-1)^{\hat A \cdot \hat B}\{B,A\}_{\mathcal{J}}.
	\]
\end{prop}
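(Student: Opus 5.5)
The plan is to start from the definition \eqref{DEFPB} of $\{A,B\}_{\mathcal{J}}$ in its right/left form and move $\mathcal{J}$ onto the left factor with the adjoint definition \eqref{adjointj}. Set $f_i = \frac{\delta^R A}{\delta m_i}$ and $g_j = \frac{\delta^L B}{\delta m_j}$. Since $\deg \frac{\delta A}{\delta m_i} = \hat A + \hat i$, the vector $(f_i)$ has degree $\hat A$ and $(g_j)$ has degree $\hat B$. Applying \eqref{adjointj} gives
\[
\{A,B\}_{\mathcal{J}} = \int dx \sum_{i,j} (-1)^{(\hat A+\hat i)\cdot(\hat i+\hat j)} \Big((\mathcal{J}^\dagger)_{ji} f_i\Big) g_j .
\]
Skew-adjointness then lets me replace $(\mathcal{J}^\dagger)_{ji}$ by $-(-1)^{\hat i\cdot\hat j}\mathcal{J}_{ji}$.

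Next I will commute the two factors in each term. The factor $\mathcal{J}_{ji}f_i$ has degree $\hat j+\hat i+\hat A+\hat i = \hat A+\hat j$, and $g_j$ has degree $\hat B+\hat j$. Graded commutativity therefore produces the sign $(-1)^{(\hat A+\hat j)\cdot(\hat B+\hat j)}$, leaving $g_j\,\mathcal{J}_{ji} f_i$ inside the integral.

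After that I convert the functional derivatives using \eqref{RLfunctionalder}. The factor $g_j = \frac{\delta^L B}{\delta m_j}$ becomes $(-1)^{(\hat B+\hat j)\cdot\hat j}\frac{\delta^R B}{\delta m_j}$. The factor $f_i = \frac{\delta^R A}{\delta m_i}$ becomes $(-1)^{(\hat A+\hat i)\cdot\hat i}\frac{\delta^L A}{\delta m_i}$. The result is $-(\text{total sign})\int \sum \frac{\delta^R B}{\delta m_j}\mathcal{J}_{ji}\frac{\delta^L A}{\delta m_i}$, which has the form of $\{B,A\}_{\mathcal{J}}$.

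The main step, and the only real work, is checking that the accumulated sign equals $(-1)^{\hat A\cdot\hat B}$ for every pair $(i,j)$. The exponent to reduce mod 2 is
\[
(\hat A+\hat i)\cdot(\hat i+\hat j)+\hat i\cdot\hat j+(\hat A+\hat j)\cdot(\hat B+\hat j)+(\hat B+\hat j)\cdot\hat j+(\hat A+\hat i)\cdot\hat i .
\]
I will expand this bilinearly, using symmetry of the dot product and $2x=0$. The $\hat A\cdot\hat i$ terms appear twice and cancel, as do the $\hat i\cdot\hat i$ terms and the $\hat i\cdot\hat j$ terms. The remaining $\hat A\cdot\hat j$, $\hat j\cdot\hat B$ and $\hat j\cdot\hat j$ contributions also cancel in pairs. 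What survives is $\hat A\cdot\hat B$, which gives the claimed identity.

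Two routine points need care. First, integration by parts inside $\mathcal{J}^\dagger$ is already encoded in \eqref{adjointj}, so no boundary terms arise beyond the quotient by total derivatives. Second, the degree of $\mathcal{J}_{ij}$ is $\hat i+\hat j$, and this must be used consistently in every degree count.
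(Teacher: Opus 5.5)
Your proposal is correct and is the same direct verification the paper intends: the paper only says the result is ``easy to verify'' from the adjoint definition and the entrywise formula $(\mathcal{J}^\dagger)_{ij}=(\mathcal{J}_{ji})^\dagger$. Your four steps (move $\mathcal{J}$ across, apply skew-adjointness, graded-commute the factors, convert left and right functional derivatives) and your mod-2 reduction of the exponent to $\hat A\cdot\hat B$ all check out.
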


For the operators $P_1 $ and $P_2$ of the $\Z2$-CH equation, one can see that they are skew-adjoint from the following computations of their adjoint:
\begin{align}
	P_1 ^\dagger &\overset{\eqref{adjointelements}}{=}    \begin{pmatrix}
		p_{1}^\dagger &p_{2}^\dagger &p_{5}^\dagger &ip_{6}^\dagger\\
		p_{2}^\dagger &p_{1}^\dagger &p_{6}^\dagger &ip_{5}^\dagger\\
		p_{3}^\dagger &ip_{4}^\dagger &p_{7}^\dagger &-p_{8}^\dagger\\
		p_{4}^\dagger &-ip_{3}^\dagger &p_{8}^\dagger &p_{7}^\dagger
	\end{pmatrix}
	\overset{\eqref{DEF:ajoint}}{=}
	\begin{pmatrix}
		-p_{1} &-p_{2} &-p_{3} &-p_{4}\\
		-p_{2} &-p_{1} &ip_{4} &-ip_{3}\\
		-p_{5} &p_{6} &p_{7} &-p_{8}\\
		-ip_{6} &ip_{5} &p_{8} &p_{7}
	\end{pmatrix},
	\notag \\[3pt]
	P_2^\dagger &\overset{\eqref{adjointelements}}{=} \begin{pmatrix}
		M^\dagger &V^\dagger &0 &0\\
		V^\dagger &M^\dagger &0 &0\\
		0 &0 &\frac{1}{2}L^\dagger &0\\
		0 &0 &0 &\frac{1}{2}L^\dagger
	\end{pmatrix}
	\overset{\eqref{DEF:ajoint}}{=}
	\begin{pmatrix}
		-M &-V &0 &0\\
		-V &-M &0 &0\\
		0 &0 &\frac{1}{2}L &0\\
		0 &0 &0 &\frac{1}{2}L
	\end{pmatrix}.
\end{align}
Thus, the Poisson brackets associated with $P_1$ and $P_2$ are $\Z2$-skew-symmetric.  

Next, we consider the $\Z2$-Jacobi relation for the Poisson bracket associated with a skew-adjoint operator $P$.  
Throughout this discussion, $P$ is assumed to be skew-adjoint. 
Let $ F = (F_1, F_2, \dots, F_q)$ be a vector of degree $\hat{F}$, i.e., $ \hat{F}_i = \hat{F} + \hat{i}.$  
We define the evolutionary vector field associated with $F$ by the right derivative as
\begin{equation}
	\overleftarrow{\mathbf{V}}_F = \sum_i \overleftarrow{\frac{\partial}{\partial m_i}}F_i.
\end{equation}
To consider the prolongation of this vector field, we introduce the total derivative operator defined using the left derivative as
\begin{equation}
	D_x = \partial_x + \sum_{i,n} m_i^{(n+1)} \overrightarrow{\frac{\partial }{\partial m_i^{(n)}}},
\end{equation}
where $ m_i^{(n)} := \partial_x^n\, m_i$ as before. 
For better readability,  we do not indicate the arrow over $D_x$. 
The prolongation of  $ \overleftarrow{\mathbf{V}}_F $ is then defined by
\begin{equation}
	\pr{F} =\sum_{i,n} \overleftarrow{\frac{\partial}{\partial m_i ^{(n)}}} \big[(D_x)^n F_i \big]. 
	\label{Def:Prolongation}
\end{equation} 

Verification of the $\Z2$-Jacobi relation is simplified significantly by introducing the dual objects $dm_i^{(n)}$ to $  \overrightarrow{\frac{\partial }{\partial m_i^{(n)}}}. $ 
The duality pairing is defined by 
\begin{equation}
	\left\langle dm_i^{(n)}, \overleftarrow{\frac{\partial}{\partial m_j^{(r)}}} \right\rangle = \delta_{ij} \delta^{nr}.
\end{equation}
This pairing is extended linearly as
\begin{equation}
	\left\langle \sum_{i,n} \alpha_{i,n}dm_i^{(n)}, \sum_{j,r} \overleftarrow{\frac{\partial}{\partial m_j^{(r)}}} X_{j,r} \right\rangle = \sum_{i,n} \alpha_{i,n} X_{i,n},
\end{equation}
and to tensor product as
\begin{equation}
	\left\langle   dm_i^{(n)} \otimes dm_j^{(r)},  \overleftarrow{\frac{\partial}{\partial m_k^{(s)}}} \otimes \overleftarrow{\frac{\partial}{\partial m_{l}^{(t)}}} \right\rangle
	= 
	(-1)^{\hat{j}\cdot \hat{k}}
	\left\langle  dm_i^{(n)}, \overleftarrow{\frac{\partial}{\partial m_k^{(s)}}} \right\rangle
	\,
	\left\langle  dm_j^{(r)}, \overleftarrow{\frac{\partial}{\partial m_l^{(t)}}} \right\rangle.
\end{equation}

We introduce the functional bi-vector $\Theta_P$ associated with  $P$ as
\begin{align}
	\Theta_P =- \frac{1}{2}\int dx \sum_{i,j} P_{ij}\, dm_j \wedge dm_i,
\end{align}
where the $\Z2$-graded wedge product is defined by
\begin{equation}
	dm_i^{(n)} \wedge dm_j^{(r)} := dm_i^{(n)} \otimes dm_j^{(r)}  - (-1)^{\hat{i}\cdot \hat{j}} dm_j^{(r)} \otimes dm_i^{(n)}. 
\end{equation}
With these settings, we obtain the following proposition:
\begin{prop} \label{Prop:Tri}
	If $ dm_i^{(n)} \pr{Pdm} = 0$ for all $i$ and $n$, then 
	the $\Z2$-Jacobi relation is equivalent to 
	\begin{equation}
		\left\langle  \Theta_P \pr{Pdm} ,  \pr{L(a)} \otimes \pr{L(b)} \otimes \pr{L(c)}\right\rangle = 0, \label{Tri}
	\end{equation}
	where $ P dm $ and $L(a)$ are vectors whose $i$-th components are given by $\displaystyle \sum_j P_{ij}\, dm_j$ and $L_i(a)$, respectively. Here $L_i(a)$ denotes the left Euler operator \eqref{DefEuler} for a functional $ A = \int dx\; a$. 
\end{prop}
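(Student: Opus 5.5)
This is the graded analogue of Olver's Theorem 7.8. I would obtain it by computing the Jacobiator directly and recognising it as the trivector pairing.

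\textbf{Step 1: variation of a bracket.} For functionals $A,B,C$, first derive the graded variational formula
\[
\mathrm{pr}\,\overleftarrow{\mathbf V}_{Q}(B)\ \text{modulo total $x$-derivatives}
=\int dx\sum_j (\pm)\, Q_j\,\frac{\delta^L B}{\delta m_j}.
\]
This follows from the definition of the prolongation \eqref{Def:Prolongation}, together with integration by parts in each $m_j^{(n)}$ and Proposition \ref{Prop:Euler}. Applying it with $Q=P\,L(c)$ shows that $\{B,C\}_P$ is, up to a sign of the form $(-1)^{\hat\cdot\hat{}}$, the action of $\pr{P L(c)}$ on $B$.

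\textbf{Step 2: decompose the Jacobiator.} Next compute $\{\{A,B\},C\}$. Differentiating $\int \frac{\delta^R A}{\delta m_i}P_{ij}\frac{\delta^L B}{\delta m_j}$ along $\pr{PL(c)}$ produces three kinds of terms:
\begin{itemize}
\item the derivative falls on $L(a)$;
\item the derivative falls on $L(b)$;
\item the derivative falls on the coefficients of $P$.
\end{itemize}
The key fact for the first two kinds is that the Fréchet derivative of an Euler–Lagrange expression is graded self-adjoint, $D_{L(a)}^\dagger=\pm D_{L(a)}$. This is the graded Helmholtz condition, obtained by commuting left and right derivatives with the sign rule of Proposition \ref{Prop:Euler}. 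Combining it with the skew-adjointness of $P$, which is what gives the $\Z2$-skew-symmetry proposition, the six terms of the first two kinds cancel pairwise in the cyclic sum \eqref{Jacobi} once the prefactors $(-1)^{\hat A\cdot\hat C}$ etc.\ are inserted.

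\textbf{Step 3: match with the trivector.} What remains is the cyclic sum of $\int L(a)\,\big(\mathrm{pr}\,\overleftarrow{\mathbf V}_{PL(c)}P\big)\,L(b)$. I would rewrite $\pr{P L(c)}$ acting on $P$ through the formal substitution $dm\mapsto L(c)$ in $\pr{Pdm}$. The hypothesis $dm_i^{(n)}\pr{Pdm}=0$ is exactly what guarantees this substitution is legitimate: $\pr{Pdm}$ acts only on the coefficients of $P$ and not on the $dm$'s, so differentiating $\Theta_P$ commutes with pairing against $\pr{L(b)}\otimes\pr{L(c)}$.

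Then expand $\Theta_P=-\tfrac12\int\sum P_{ij}\,dm_j\wedge dm_i$ and take the pairing with $\pr{L(a)}\otimes\pr{L(b)}\otimes\pr{L(c)}$ using the graded tensor rule. The wedge antisymmetrisation produces the three cyclic terms, with the $\tfrac12$ absorbed by the two orderings, and these are precisely the remaining Jacobiator terms. Hence the Jacobiator equals a nonzero constant times \eqref{Tri}. Since $a,b,c$ are arbitrary, the Jacobi relation holds if and only if \eqref{Tri} vanishes, which gives the equivalence.

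\textbf{Main obstacle.} The real work is the sign bookkeeping in two places. The first is showing that the prefactors $(-1)^{\hat A\cdot\hat C}$ of \eqref{Jacobi} combine with the sign $(-1)^{\hat j\cdot\hat k}$ of the tensor pairing, the wedge sign $(-1)^{\hat i\cdot\hat j}$, and the left/right conversion signs \eqref{RLfunctionalder} into a single overall sign. The second is checking that the graded Helmholtz symmetry carries the right sign. I would first handle this by tracking degrees through the bilinear form $\hat a\cdot\hat b$ only. I would then check the one-component cases: pure $[00]$ must reproduce Olver, and $[10]$ alone must reproduce Mathieu's super case.
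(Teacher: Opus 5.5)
Your proposal follows essentially the same route as the paper's graded adaptation of Olver's argument. The paper writes $\{A,B\}_P = A\,\pr{PL(b)}$ and expands $\{\{A,B\}_P,C\}_P$ by the Leibniz rule into three terms. It cancels the terms where the prolongation hits the Euler expressions pairwise across the cyclic sum, using the graded self-adjointness of the Fr\'echet derivative $\FD{jk}{b}$ together with skew-adjointness of $P$. It then identifies the leftover sum \eqref{cyclicperm} with the pairing \eqref{Tri}, with the hypothesis $dm_i^{(n)}\pr{Pdm}=0$ playing exactly the role you assign it.

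One small correction of emphasis: in the paper the graded Helmholtz symmetry comes from the commutation $[\overrightarrow{\partial}/\partial m_j^{(n)}, D_x] = \overrightarrow{\partial}/\partial m_j^{(n-1)}$ (Lemma~\ref{Lemma:Frechet}), integration by parts, and the graded symmetry of mixed left derivatives. The left/right conversion is used instead to relate $L_j(b)\pr{PL(c)}$ to $\FD{jk}{b}$.
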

The following theorem is an immediate consequence of Proposition \ref{Prop:Tri}:
\begin{thm}
	The Poisson bracket associated with $P$ satisfies the $\Z2$-Jacobi relation if and only if
	\begin{equation}
		\Theta_P \pr{Pdm} = 0. \label{Trivector}
	\end{equation}
\end{thm}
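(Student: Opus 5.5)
The theorem is an "if and only if" statement that reduces to Proposition \ref{Prop:Tri}. The plan is to show that, under that proposition's hypothesis, the trivector condition in \eqref{Tri} being zero against all triples of prolonged gradient vector fields is equivalent to the functional trivector $\Theta_P \pr{Pdm}$ vanishing. The proof therefore has two directions.

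The direction ``$\Theta_P \pr{Pdm}=0 \Rightarrow$ Jacobi'' is immediate. If the trivector vanishes, its pairing with $\pr{L(a)}\otimes\pr{L(b)}\otimes\pr{L(c)}$ vanishes for every choice of homogeneous $a,b,c$. Proposition \ref{Prop:Tri} then gives the $\Z2$-Jacobi relation \eqref{Jacobi}. The hypothesis $dm_i^{(n)}\pr{Pdm}=0$ should be built into the definition of how $\pr{Pdm}$ acts on the forms $dm_i^{(n)}$, as in Olver's treatment, so that the proposition applies.

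For the converse, I would assume the Jacobi relation for all functionals, so that \eqref{Tri} holds for all homogeneous densities $a,b,c$. The task is to conclude that the functional trivector itself vanishes, i.e., that the family $\{L(a)\}$ is rich enough to detect a functional trivector. Mirroring Olver, Chapter 7, I would first write $\Theta_P\pr{Pdm}$, after integration by parts, in the canonical form
\begin{equation*}
\int dx \sum_{i,j,k} dm_i\wedge \big(\mathcal{D}_{ijk}\, dm_j\big)\wedge dm_k ,
\end{equation*}
where the $\mathcal{D}_{ijk}$ are $\Z2$-graded differential operators and no derivatives act on the first or last slot. The grading signs here come from the graded wedge product. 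The pairing with three prolonged evolutionary fields $\pr{F},\pr{G},\pr{H}$ then equals, up to signs $(-1)^{\cdots}$ fixed by the tensor pairing rule, an integral of the trilinear expression $F_i\,\mathcal{D}_{ijk}G_j\,H_k$ together with its graded permutations.

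The key step is then a graded version of Olver's lemma that a functional trivector vanishes if its pairing with all triples of vectors vanishes. The argument has two parts.

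First, choose $a,b,c$ so that the $L(a)$, $L(b)$, $L(c)$ are arbitrary. For instance, the densities $a=m_i\,\phi(x)$ with $\phi$ a graded coefficient function give $L_j(a)=\delta_{ij}\phi$. Since the jet algebra allows coefficients of any degree, this realises an arbitrary smooth function of the appropriate degree in any single component. If the coefficient algebra is too small, I would adjoin auxiliary graded parameters, which is harmless because the identity is multilinear.

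Second, use trilinearity and the graded symmetry of the wedge product to polarise. The vanishing of the pairing for all compactly supported $F,G,H$ of each degree then forces every canonical coefficient $\mathcal{D}_{ijk}$ to vanish, by the usual integration-by-parts and du Bois-Reymond argument applied componentwise.

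The main obstacle I expect is sign bookkeeping. The $\Z2$ signs from the tensor pairing, from moving homogeneous coefficients past the odd-like $dm_i$, and from $R$ versus $L$ Euler operators (Proposition \ref{Prop:Euler}) must all assemble into a consistent graded antisymmetry. Only then can the vanishing of all pairings be polarised into the vanishing of the canonical form. I would handle this by tracking every sign as $(-1)^{\hat x\cdot\hat y}$, using bilinearity of $\hat{\ }\cdot\hat{\ }$ over $\Z2$, and reducing each case to the ungraded one by noting that each sign depends only on degrees, never on $x$.
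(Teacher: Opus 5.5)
Your reduction is the paper's own. The paper calls the theorem an immediate consequence of Proposition~\ref{Prop:Tri}, tacitly using (as in Olver, Ch.~7) that a functional tri-vector vanishes iff all its pairings with $\pr{L(a)}\otimes\pr{L(b)}\otimes\pr{L(c)}$ vanish. Your forward direction is fine, as is your reading of the hypothesis $dm_i^{(n)}\pr{Pdm}=0$ as the convention that the prolongation does not act on the $dm$'s. Your test densities $a=m_i\phi(x)$ are also fine; a scalar $\phi$ already gives $L_j(a)=\delta_{ij}\phi$.

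\textbf{Gap.} The canonical form you want for the converse does not exist in general: integration by parts can clear derivatives from one slot of a tri-vector, not from two. Take a single component $m$ of degree $[00]$ (or $[11]$). Every expression $dm\wedge(\mathcal{D}\,dm)\wedge dm$ vanishes by graded antisymmetry. Yet $\int dx\,dm\wedge dm_x\wedge dm_{xx}$ is a nonzero functional tri-vector: it is not a total derivative, since a primitive would have to be a single-component tri-vector of lower derivative order, and all of those contain a repeated factor and vanish.

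Such terms genuinely arise. For the skew-adjoint but non-Hamiltonian $P=\partial_x^3+m^2\partial_x+mm_x$, $\Theta_P\pr{Pdm}$ is a nonzero multiple of $\int dx\,m_x\,dm\wedge dm_x\wedge dm_{xx}$. This has no representation of your form, so showing that your $\mathcal{D}_{ijk}$ vanish does not show that the tri-vector vanishes.

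\textbf{Repair.} Normalise one slot only.
\begin{enumerate}
\item Integrate by parts so that no derivatives fall on the first test vector, i.e.\ write the pairing as $\int dx\sum_i F_i\,\mathcal{B}_i(G,H)$, with $\mathcal{B}_i$ a bilinear differential expression carrying the $\Z2$ signs.
\item Arbitrariness of $F$ gives $\mathcal{B}_i(G,H)=0$.
\item Arbitrary jets of $G,H$ at a point then kill all coefficients of $\mathcal{B}_i$.
\item Substitute $dm$ back for $F,G,H$, keeping their order, which is compatible with $D_x$. This shows that a nonzero multiple of $\Theta_P\pr{Pdm}$ equals $\int dx\sum_i dm_i\wedge\mathcal{B}_i(dm,dm)$ modulo total derivatives, so it vanishes.
\end{enumerate}

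\textbf{Framework.} You should also state which notion of vanishing you use. Du Bois-Reymond treats functionals as honest integrals over decaying fields. In the paper's setting of densities modulo total $x$-derivatives, any density depending on $x$ alone is a total derivative. There the $x$-only gradients $L(m_i\phi)$ cannot detect field-independent coefficients of the tri-vector, and you must also use test densities whose gradients depend on the fields.
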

We call a skew-adjoint operator $P$ a \textit{Hamiltonian operator} if it defines a Poisson bracket. 

\begin{prop}
	Let $P $ and $Q$ be Hamiltonian operators. Then $ P+Q$ is also a Hamiltonian operator if and only if
	\begin{align}
		(\Theta_P + \Theta_Q ) \pr{(P+Q)dm} =0. \label{PplusQ}
	\end{align}
\end{prop}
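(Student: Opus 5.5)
Since $P$ and $Q$ are Hamiltonian operators, both are skew-adjoint, and skew-adjointness is a linear condition on the entries, so $P+Q$ is skew-adjoint as well. The only thing left to check is the $\Z2$-Jacobi relation for $P+Q$. Applying the preceding theorem (the criterion \eqref{Trivector}) to the skew-adjoint operator $P+Q$, the $\Z2$-Jacobi relation holds if and only if $\Theta_{P+Q}\,\pr{(P+Q)dm}=0$. So the proposition will follow once we show that this trivector coincides with the left-hand side of \eqref{PplusQ}.

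The first step is linearity of the bivector in the operator. From its definition,
\begin{equation*}
\Theta_{P+Q}=-\frac12\int dx\sum_{i,j}(P_{ij}+Q_{ij})\,dm_j\wedge dm_i=\Theta_P+\Theta_Q .
\end{equation*}
Substituting this gives $\Theta_{P+Q}\,\pr{(P+Q)dm}=(\Theta_P+\Theta_Q)\,\pr{(P+Q)dm}$, which is exactly the expression in \eqref{PplusQ}. The equivalence then follows directly.

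Before applying \eqref{Trivector}, we must check that its hypothesis is satisfied, namely the condition in Proposition \ref{Prop:Tri}: $dm_i^{(n)}\pr{(P+Q)dm}=0$ for all $i$ and $n$. This hypothesis concerns only how the prolonged vector field acts on the basic one-forms $dm_i^{(n)}$, not the operator. I would verify it from the definition \eqref{Def:Prolongation}, using that $\pr{F}$ depends additively on $F$ through $(D_x)^nF_i$. Because $(P+Q)dm=Pdm+Qdm$, this additivity gives $\pr{(P+Q)dm}=\pr{Pdm}+\pr{Qdm}$. The hypothesis for $P+Q$ is therefore inherited from the corresponding statements for $P$ and $Q$, which hold because $P$ and $Q$ are Hamiltonian and so the preceding theorem applies to them.

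\textbf{Main obstacle: bookkeeping of signs.} We must confirm that the additivity of $\Theta$ and of $\pr{\cdot}$ introduces no stray signs. In particular, the pairing and the graded wedge product carry sign factors such as $(-1)^{\hat j\cdot\hat k}$. These factors depend only on the degrees $\hat i,\hat j$, and $\deg P_{ij}=\deg Q_{ij}=\hat i+\hat j$ because both brackets have degree $[00]$. Hence the sign factors are identical for $P$ and $Q$, and summing the two operators term by term is legitimate.

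A useful consistency check is to expand the trivector by bilinearity:
\begin{equation*}
(\Theta_P+\Theta_Q)\,\pr{(P+Q)dm}=\Theta_P\pr{Pdm}+\Theta_Q\pr{Qdm}+\Theta_P\pr{Qdm}+\Theta_Q\pr{Pdm}.
\end{equation*}
The first two terms vanish because $P$ and $Q$ are Hamiltonian. This reduces \eqref{PplusQ} to the familiar compatibility condition $\Theta_P\pr{Qdm}+\Theta_Q\pr{Pdm}=0$, as in Olver's Chapter 7.
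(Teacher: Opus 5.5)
Your proof is correct and is the same argument the paper relies on implicitly (it states the proposition without proof as a consequence of the preceding theorem). Since $P+Q$ is skew-adjoint and $\Theta_{P+Q}=\Theta_P+\Theta_Q$ by linearity, applying the criterion \eqref{Trivector} to $P+Q$ gives the equivalence. One correction: the side condition $dm_i^{(n)}\pr{Pdm}=0$ does not follow from $P$ and $Q$ being Hamiltonian, but is the standing Olver-type convention that prolonged vector fields act trivially on the basic forms $dm_i^{(n)}$, so it holds for $P+Q$ (indeed for any operator) automatically.
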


For the $\Z2$-CH equation, a direct computation shows that the skew-adjoint operators $P_1$ and $P_2$ satisfy \eqref{Trivector} and \eqref{PplusQ}. 
This completes the proof of Theorem \ref{Thm:Hpair}. 

As we have seen above, Proposition \ref{Prop:Tri} plays an essential role in the proof of Theorem \ref{Thm:Hpair}. In what follows, we outline the proof of Proposition \ref{Prop:Tri}, focusing on the modifications required to adapt the proof in Chapter 7 of \cite{Olver1993applications} to the $\Z2$-graded setting.

First, we observe that the Poisson bracket \eqref{DEFPB} can be expressed in terms of the prolongation:
\begin{align}
	\{A,B\}_P= A \pr{PL(b)}. \label{PBprolong}
\end{align} 
This follows by evaluating the right-hand side and using Definition \ref{Def:FuncDiff}. 
We then introduce the operator $ \FD{jk}{b} $ associated with a $\Z2$-graded function $b$, defined by  
\begin{align}
	\FD{jk}{b}  = \sum_{n,r} \left(\overrightarrow{\frac{\partial}{\partial m_k^{(n)}}}(-D_x)^r \overrightarrow{\frac{\partial }{\partial m_j^{(r)}}} \,b  \right)D_x^n,
\end{align}
where $D_x^n := D_x \cdot D_x \cdots D_x$ ($n$ times). 
Note that $\deg(\FD{jk}{b})= \hat j+ \hat k +\hat b.$ 
This operator is a $\Z2$-graded analogue of the Fr\'echet derivative. 
\begin{prop} \label{Prop:Frechet}
	The operator $\FD{jk}{b}$ satisfies the following properties:
	\begin{align}
		L_j(b) \pr{PL(c)} &= \sum_{k,l} (-1)^{(\hat b +\hat j +\hat k)\cdot \hat k} \,\FD{jk}{b} (P_{kl}L_l(c)), \label{Property1}
		\\
		\big(\FD{kj}{b} \big)^{\dagger} &= (-1)^{\hat k \cdot \hat j}\, \FD{jk}{b}.
		\label{Property2}
	\end{align}
	
\end{prop}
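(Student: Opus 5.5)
Both identities are local, so I would prove them directly from the definitions, working on homogeneous monomials and extending by linearity. The Leibniz sign rule for left derivatives, Proposition \ref{Prop:Euler}, and the convention that the total derivative $D_x$ carries degree $[00]$ (so it commutes with every sign factor) are the only tools required.

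For \eqref{Property1}, I start from the prolongation \eqref{Def:Prolongation} with $F_k=\sum_l P_{kl}L_l(c)$. Applying $\pr{F}$ to the function $L_j(b)$ gives
\[
\sum_{k,n}\Big(L_j(b)\overleftarrow{\frac{\partial}{\partial m_k^{(n)}}}\Big)\,D_x^n F_k .
\]
Since the right derivative sits to the right of the quantity it differentiates, I convert it to a left derivative using \eqref{RLcomp} with $\hat F=\hat b+\hat j$ (the degree of $L_j(b)$). This produces the factor $(-1)^{(\hat b+\hat j+\hat k)\cdot\hat k}$. Next I write $L_j(b)=\sum_r(-D_x)^r\overrightarrow{\partial}_{m_j^{(r)}}b$. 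With this substitution, $\overrightarrow{\partial}_{m_k^{(n)}}L_j(b)$ is exactly the coefficient appearing in $\FD{jk}{b}$. The sign factor does not depend on $n$ or $r$, so summing over $n$ gives \eqref{Property1}. I should check that ordering the coefficient to the left of $D_x^nF_k$ introduces no additional sign. It does not, because in the prolongation the coefficient already multiplies on the left.

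For \eqref{Property2}, I follow Olver's proof that the Fréchet derivative of an Euler–Lagrange expression is self-adjoint. I use the commutation relation $\overrightarrow{\partial}_{m_k^{(n)}}D_x=D_x\overrightarrow{\partial}_{m_k^{(n)}}+\overrightarrow{\partial}_{m_k^{(n-1)}}$, which holds without signs because $D_x$ has degree $[00]$. The cleanest formulation uses the pairing $\int f\,\FD{kj}{b}\,g\,dx$ for test functions $f$ and $g$ of homogeneous degree. This integral equals the second variation of $\int b\,dx$, namely $\frac{d^2}{d\epsilon\,d\delta}\int b[\mathsf m+\epsilon f e_k+\delta g e_j]$, taken with graded ordering. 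The symmetry of mixed second left derivatives, $\overrightarrow{\partial}_{m_k^{(n)}}\overrightarrow{\partial}_{m_j^{(r)}}=(-1)^{\hat j\cdot\hat k}\overrightarrow{\partial}_{m_j^{(r)}}\overrightarrow{\partial}_{m_k^{(n)}}$, then yields $\int f\,\FD{kj}{b}g=\pm\int (\FD{jk}{b}f)g$. Comparing this with Definition \eqref{DEF:ajoint} gives $(\FD{kj}{b})^\dagger=(-1)^{\hat k\cdot\hat j}\FD{jk}{b}$, with the extra factor $(-1)^{\hat f\cdot(\hat j+\hat k+\hat b)}$ cancelling.

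I expect the main obstacle to be the sign bookkeeping in this last step. To handle it, I would first establish the graded symmetry of mixed left derivatives on monomials by the argument of Proposition \ref{Prop:Euler}. I would then carry out the integrations by parts, where each moves $D_x$ without any sign. Finally, I would match the degrees so that the remaining sign is exactly $(-1)^{\hat j\cdot\hat k}$. As a sanity check, I would verify the result on a low-order example such as $b=m_j m_k'$.
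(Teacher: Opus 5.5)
Your argument for \eqref{Property1} is the same as the paper's. You apply the prolongation to $L_j(b)$, convert each right derivative by \eqref{RLcomp} with degree $\hat b+\hat j$, and recognize $\sum_n(\overrightarrow{\partial}_{m_k^{(n)}}L_j(b))D_x^n=\FD{jk}{b}$.

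For \eqref{Property2} you take a genuinely different route. The paper stays inside the formal jet calculus:
\begin{itemize}
\item Lemma \ref{Lemma:Frechet} pulls $\overrightarrow{\partial}_{m_k^{(n)}}$ through $(-D_x)^r$;
\item two integrations by parts and the graded swap of mixed left derivatives produce $(-1)^{\hat k\cdot\hat j}$;
\item \eqref{derivativelemma} is used backwards to reassemble $\FD{kj}{b}$.
\end{itemize}
This works for arbitrary test functions, even jet-dependent ones, and needs no concrete realization of the graded variables.

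You instead use Olver's Hessian argument: $\int f\,\FD{kj}{b}\,g$ and $\int g\,\FD{jk}{b}\,f$ are the two orders of the mixed second variation of $\int b\,dx$. This is more conceptual and avoids the iterated commutator, but two points you left vague need fixing.

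First, take real $\epsilon,\delta$ with $\hat f=\hat k$ and $\hat g=\hat j$. Otherwise $m_k+\epsilon f$ is inhomogeneous and the expansion $b[\mathsf m+\epsilon fe_k]=b+\epsilon\sum_n f^{(n)}\overrightarrow{\partial}_{m_k^{(n)}}b+\cdots$ is not legitimate. This restriction is harmless, because the factor $(-1)^{\hat f\cdot\hat{\mathcal O}}$ in \eqref{DEF:ajoint} makes $\mathcal O^\dagger$ independent of the test degree. Schwarz's theorem then gives $(-1)^{\hat j\cdot(\hat b+\hat j+\hat k)}\int f\,\FD{kj}{b}g=(-1)^{\hat k\cdot(\hat b+\hat j+\hat k)}\int g\,\FD{jk}{b}f$. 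Moving $g$ to the right yields exactly the sign $(-1)^{\hat k\cdot(\hat k+\hat j+\hat b)+\hat k\cdot\hat j}$ required by $(\FD{kj}{b})^\dagger=(-1)^{\hat k\cdot\hat j}\FD{jk}{b}$. Note that in this evaluated version the symmetry you actually use is that of $\partial_\epsilon\partial_\delta$, not the graded symmetry of jet derivatives you cite.

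Second, passing from identities evaluated along sections to an identity of formal operators needs a non-degeneracy step: the coefficient algebra must contain enough elements of each degree. You can avoid this by working formally with the degree-$[00]$ fields $\sum_n f^{(n)}\overrightarrow{\partial}_{m_k^{(n)}}$, where $f$ is jet-independent. Their graded commutation is then your mixed-derivative symmetry. Their commutation with $D_x$, needed for the first-variation formula, is essentially Lemma \ref{Lemma:Frechet}. At that point the two proofs use the same ingredients, organized differently.
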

\begin{proof}
	Using the relation between the left and right derivatives, \eqref{Property1} follows by direct computation. To prove \eqref{Property2}, we use Lemma \ref{Lemma:Frechet}, given below.  
	Applying Lemma \ref{Lemma:Frechet} repeatedly, we obtain, for any $\Z2$-graded functions $ F_j$ and $ G_k$,
	\begin{align*}
		\int dx \sum_{j,k}  F_j\, \FD{jk}{b}\, G_k&= \int dx \sum_{j,k,n,r} F_j\left( \overrightarrow{\frac{\partial}{\partial m^{(n)}_k}} (-D_x)^r \overrightarrow{\frac{\partial}{\partial m_j ^{(r)}}} \, b\right)D_x^n G_k\\
		&=\int dx  \sum_{j,k,n,r} F_j (-D_x)^r 
		\left[ \left(\overrightarrow{\frac{\partial}{\partial m_k^{(n)}}} \; \overrightarrow{\frac{\partial}{\partial m_j^{(r)}}} \, b \right)D_x^n G_k
		\right]. 
	\end{align*}
	Repeating the integration by parts, we can rewrite this as
	\begin{align*}
		&\int dx \sum_{j,k,n,r}  (D_x^r F_j)  \left((-1)^{\hat k \cdot \hat j}\overrightarrow{\frac{\partial}{\partial m_j^{(r)}}} \;
			\overrightarrow{ \frac{\partial}{\partial m_k^{(n)}}} \,b \right)D_x^nG_k 
		\\
		&= \int dx \sum_{j,k,n,r} (-1)^{\hat F_j \cdot (\hat j + \hat k + \hat b) + \hat k \cdot \hat j} \left( (-D_x)^n \left[ \left( \overrightarrow{\frac{\partial}{\partial m_j^{(r)}}} \; \overrightarrow{\frac{\partial}{\partial m_k^{(n)}}}\,b\right) \Big(D^r_x F_j \Big) 
		\right] \right) G_k
		\\
		&\overset{\eqref{derivativelemma}}{=} \int dx \sum_{j,k} (-1)^{\hat F_j \cdot  (\hat j + \hat k + \hat b) } \left((-1)^{\hat k \cdot \hat j} \FD{kj}{b} F_j  \right)G_k.
	\end{align*}
This completes the proof of \eqref{Property2}.
\end{proof}

\begin{lemma} \label{Lemma:Frechet}
	For any functions $P, Q$ of $ m_i^{(n)}$, we have
	\begin{align}
		\sum_n \left(\overrightarrow{ \frac{\partial}{\partial m_j^{(n)} } } D_x Q\right) D_x^n P = D_x \left[
	    	\sum_n \left( \overrightarrow{ \frac{\partial}{\partial m_j^{(n)}} }Q \right)D_x^n P 
		\right]. \label{derivativelemma}
	\end{align}
\end{lemma}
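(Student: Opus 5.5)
The plan is to reduce \eqref{derivativelemma} to a graded commutation relation between the left derivatives and the total derivative $D_x$, and then apply the Leibniz rule for $D_x$. Throughout I write $\partial_{j,n} := \overrightarrow{\partial/\partial m_j^{(n)}}$, and I set $\partial_{j,-1} := 0$.

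The key identity to establish first is
\begin{equation*}
	\partial_{j,n} \, D_x Q = D_x \, \partial_{j,n} Q + \partial_{j,n-1} Q,
\end{equation*}
valid for every homogeneous function $Q$ on the jet space. To prove it, expand $D_x Q = \partial_x Q + \sum_{i,k} m_i^{(k+1)}\, \partial_{i,k} Q$ and note that $\partial_{j,n}$ commutes with $\partial_x$. Applying the graded Leibniz rule for the left derivative to each product $m_i^{(k+1)}\,\partial_{i,k}Q$ gives two contributions:
\begin{equation*}
	\partial_{j,n}\big(m_i^{(k+1)}\,\partial_{i,k}Q\big) = \delta_{ij}\,\delta^{n,k+1}\,\partial_{i,k}Q + (-1)^{\hat j\cdot \hat i}\, m_i^{(k+1)}\,\partial_{j,n}\partial_{i,k} Q .
\end{equation*}
The first contribution, summed over $i$ and $k$, produces $\partial_{j,n-1}Q$.

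For the second contribution I need the graded commutativity of left derivatives,
\begin{equation*}
	\partial_{j,n}\partial_{i,k} = (-1)^{\hat j\cdot\hat i}\,\partial_{i,k}\partial_{j,n}.
\end{equation*}
I would check this on monomials, by moving the two relevant variables to the left in either order, in the same way as in the proof of Proposition \ref{Prop:Euler}. With this relation the sign $(-1)^{\hat j\cdot\hat i}$ cancels. The remaining sum over $i,k$ then reassembles into $D_x\,\partial_{j,n}Q$, which proves the key identity. This sign bookkeeping is the only delicate point, and it is where I expect the main obstacle to lie: when $\hat i\cdot\hat j=1$ the two derivatives anticommute, and one must confirm that this sign exactly compensates the sign produced by passing $\partial_{j,n}$ across $m_i^{(k+1)}$.

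With the key identity in hand, substitute it into the left-hand side of \eqref{derivativelemma}. All sums are finite because $Q$ depends on only finitely many jet variables. After the substitution, shift the index in the $\partial_{j,n-1}Q$ terms, so that the left-hand side becomes
\begin{equation*}
	\sum_n \Big[ \big(D_x\,\partial_{j,n}Q\big)\, D_x^n P + \big(\partial_{j,n}Q\big)\, D_x^{n+1}P \Big].
\end{equation*}
Since $D_x$ is a derivation of degree $[00]$, it satisfies the ordinary Leibniz rule with no signs. Hence each bracketed term equals $D_x\big[(\partial_{j,n}Q)\,D_x^n P\big]$, and summing over $n$ yields the right-hand side of \eqref{derivativelemma}.
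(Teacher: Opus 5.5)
Your proposal is correct and follows the same route as the paper. The paper invokes the commutation relation $[\overrightarrow{\partial}/\partial m_j^{(n)}, D_x] = \overrightarrow{\partial}/\partial m_j^{(n-1)}$ without proof and then does exactly your index shift and ungraded Leibniz step for $D_x$; the only difference is that you derive that relation from the graded Leibniz rule and the graded commutativity of left derivatives, and your sign bookkeeping there is right, since the two factors $(-1)^{\hat i\cdot\hat j}$ cancel.
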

\begin{proof}
	Applying the commutation relation:
	\begin{equation}
		\left[\frac{\overrightarrow{\partial}}{\partial m_j^{(n)}},D_x \right]= \frac{\overrightarrow{\partial}}{\partial m_j ^{(n-1)}}
	\end{equation}
	to the left-hand side, we have
	\[
	   \mathrm{l.h.s.} 
	   = 
	   \sum_n \left[
	      \left( D_x \,\overrightarrow{\frac{\partial}{\partial m_j^{(n)}}}\, Q \right) D_x^n P + 
	      \left(\overrightarrow{\frac{\partial}{\partial m_j^{(n)}}}\, Q \right) D_x^{n+1}P
	   \right].
	\]
	This is identical to the right-hand side.
\end{proof}

We now compute the first term in the $\Z2$-Jacobi relation. By the Leibniz rule for the prolongation \eqref{Def:Prolongation}, we proceed as follows:
\begin{align}
	\{\{A,B\}_P,C\}_P&\overset{\eqref{PBprolong}}{=}  \{A,B\}_P \pr{PL(c)}    
	\notag\\
	&\overset{\eqref{DEFPB}}{=}
	\int dx \sum_{i,j} \big( R_i(a)P_{ij}L_j(b) \big) \pr{PL(c)}
	\notag \\
	&=\int dx \sum_{i,j} \Big[ 
	R_i(a) P_{ij} \;\big(  L_j(b) \pr{PL(c)} \big)+ (-1)^{\hat c\cdot (\hat b + \hat j)}R_i(a)  \big( P_{ij} \pr{PL(c)}  \big) L _j(b)\notag 
	\notag\\
	&+(-1)^{\hat c \cdot (\hat b +\hat i)}\; 
	\big( R_i(a) \pr{PL(c)} \big)
	P_{ij}L_j(b)
	\Big].
	\label{JacobiFirst}
\end{align}
The differential operator $P_{ij}$ is written as
\begin{equation}
	P_{ij} = \sum_n p_{ij,n}D_x^n
\end{equation}
where $p_{ij,n}$ are some $\Z2$-graded functions. 
In the second term of \eqref{JacobiFirst}, $P_{ij}\pr{PL(c)}$ should be understood as a left differential operator:
\begin{equation}
	P_{ij}\pr{PL(c)}
	=\sum_n p_{ij,n}\pr{PL(c)}D_x^n.
\end{equation}

We can rewrite the first term in \eqref{JacobiFirst} by the following lemma:
\begin{lemma}
	The following identity holds:
	\begin{align}
		\int dx \sum_{i,j} R_i (a) P_{ij} \; \Big(  L_j(b) \pr{PL(c)} \Big)=-\int dx \sum_{i,j} (-1)^{\hat a \cdot ( \hat b +\hat i)} \Big( R_i(b) \pr{PL(a)} \Big) P_{ij}L_j(c). \label{z2z2prolong}
	\end{align}
\end{lemma}
\begin{proof}
	Using Proposition \ref{Prop:Frechet}, the identity follows by direct computation. 
\end{proof}

We repeat the same computation for the second and third terms in $\Z2$-Jacobi relation, and sum all three terms. We then obtain
\begin{thm}
	The $\mathbb{Z}^2_2$-Jacobi identity \eqref{Jacobi} is equivalent to
	\begin{align}
		\int dx \sum_{i,j} \left[\; (-1)^{\hat c \cdot (\hat a + \hat b + \hat j)} R_i(a) \big( P_{ij}  \pr{PL(c)} \big)  L_j(b) + \mathrm{cyclic\ perm.} \; \right]=0. \label{cyclicperm}
	\end{align}
\end{thm}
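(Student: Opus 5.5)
We start from the expansion \eqref{JacobiFirst} of the first term $(-1)^{\hat A\cdot\hat C}\{\{A,B\}_P,C\}_P$ and form the two companion terms by the cyclic substitution $(a,b,c)\to(b,c,a)\to(c,a,b)$, each multiplied by its sign prefactor from \eqref{Jacobi}. This gives nine integrals, falling into three families. Family (I) contains the terms $R_i(\cdot)P_{ij}\bigl(L_j(\cdot)\pr{PL(\cdot)}\bigr)$. Family (II) contains the terms $R_i(\cdot)\bigl(P_{ij}\pr{PL(\cdot)}\bigr)L_j(\cdot)$. Family (III) contains the terms $\bigl(R_i(\cdot)\pr{PL(\cdot)}\bigr)P_{ij}L_j(\cdot)$. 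We aim to show that families (I) and (III) cancel pairwise, so that only family (II) survives; after the sign bookkeeping, family (II) is exactly the left-hand side of \eqref{cyclicperm}.

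For the cancellation, we apply Lemma \eqref{z2z2prolong} to each family (I) term. It turns $\int R_i(a)P_{ij}(L_j(b)\pr{PL(c)})$ into $-(-1)^{\hat a\cdot(\hat b+\hat i)}\int (R_i(b)\pr{PL(a)})P_{ij}L_j(c)$, which has the family (III) shape. The term it should cancel against comes from the cyclic image $\{\{B,C\},A\}$, whose third piece is $(-1)^{\hat a\cdot(\hat c+\hat i)}(R_i(b)\pr{PL(a)})P_{ij}L_j(c)$ with the roles $(a,b,c)\to(b,c,a)$. We then check that these two contributions carry opposite total signs once the outer prefactors $(-1)^{\hat A\cdot\hat C}$, $(-1)^{\hat B\cdot\hat A}$ and the inner parity signs are combined. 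Parity is evaluated modulo $2$ using $\deg R_i(a)=\hat a+\hat i$. Where we need to relate $R$ and $L$, we use \eqref{RLfunctionalder} and $\hat i\cdot\hat i$ bookkeeping. Repeating this cyclically cancels all of families (I) and (III).

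The remaining family (II) terms, with their signs, are
$(-1)^{\hat A\cdot\hat C}(-1)^{\hat c\cdot(\hat b+\hat j)}R_i(a)(P_{ij}\pr{PL(c)})L_j(b)$ plus cyclic permutations. Since $(-1)^{\hat a\cdot\hat c}(-1)^{\hat c\cdot(\hat b+\hat j)}=(-1)^{\hat c\cdot(\hat a+\hat b+\hat j)}$, this matches \eqref{cyclicperm}. Hence the $\Z2$-Jacobi identity holds if and only if \eqref{cyclicperm} holds.

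The main obstacle is the sign verification in the cancellation step. The signs $(-1)^{\hat a\cdot(\hat b+\hat i)}$ from the lemma must combine with the cyclic prefactors and the reordering signs to give exactly $-1$ relative to the matching family (III) term. The bilinear form $\hat x\cdot\hat y$ on $\Z2$ is symmetric but not alternating ($\hat i\cdot\hat i\neq0$ in general), so we cannot drop diagonal terms. To handle this, we first expand all exponents in $\hat a,\hat b,\hat c,\hat i$. Then we use the fact that $\deg P_{ij}=\hat i+\hat j$ and that the bracket is nonzero only when the total degree is consistent, which allows us to simplify the combination.
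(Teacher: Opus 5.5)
Your proposal is correct and is essentially the paper's argument. You expand each cyclic summand via the Leibniz rule as in \eqref{JacobiFirst}, use \eqref{z2z2prolong} to turn each first-type term into a third-type term that cancels against the matching piece of the next cyclic summand, and keep the second-type terms, which reproduce \eqref{cyclicperm}. The sign check you single out as the main obstacle is actually immediate: after applying \eqref{z2z2prolong}, the term from $(-1)^{\hat a\cdot\hat c}\{\{A,B\},C\}$ carries $-(-1)^{\hat a\cdot(\hat b+\hat c+\hat i)}$, while the matching term from $(-1)^{\hat b\cdot\hat a}\{\{B,C\},A\}$ carries $+(-1)^{\hat a\cdot(\hat b+\hat c+\hat i)}$, so no $R$/$L$ conversion or degree-consistency argument is needed.
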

A direct computation of \eqref{Tri} shows that it is identical to \eqref{cyclicperm}, which establishes  Proposition \ref{Prop:Tri}. 

\bibliographystyle{JHEP} 
\bibliography{Integrable}
\end{document}